\newcommand{\eps}{\epsilon}
\newcommand{\R}{\mathbb{R}}
\newcommand{\Rbb}{\mathbb{R}}
\newcommand{\Ac}{\mathcal{A}}
\newcommand{\Sc}{\mathcal{S}}
\newcommand{\Pcal}{\mathcal{P}}
\newcommand{\Fcal}{\mathcal{F}}
\newcommand{\Ocal}{\mathcal{O}}
\def\simiid{\,{\buildrel {iid} \over \sim}\,}
\newcommand\braket[2]{\left\langle #1, #2 \right\rangle}
\newcommand\simplexProd{\Delta^{\Pcal_1} \times \dots \times \Delta^{\Pcal_I}}
\newcommand\parenth[1]{\left( #1\right)}
\newcommand\Exp[1]{\mathbb{E}\left[ #1\right]}
\newcommand{\adj}{\mathrm{Adj}}
\newcommand{\gauss}{\mathrm{Gauss}}
\DeclareMathOperator*{\argmin}{\arg\min}
\newtheorem{assumption}{Assumption}
\newtheorem{definition}{Definition}
\newtheorem{proposition}{Proposition}
\newtheorem{lemma}{Lemma}
\newtheorem{theorem}{Theorem}
\newtheorem{corollary}{Corollary}
\title{\LARGE \bf
Differential Privacy of Populations in Routing Games
}
\author{Roy Dong, Walid Krichene, Alexandre M. Bayen, and S. Shankar Sastry
\thanks{R. Dong, W. Krichene, A. M. Bayen, and S. S. Sastry are with the Department of Electrical Engineering and Computer Sciences, 
        University of California, Berkeley, Berkeley, CA, 94707, USA
        {\tt\footnotesize $\{$roydong,walid,bayen,sastry$\}$@eecs.berkeley.edu}}%
}
\begin{document}

\maketitle
\thispagestyle{empty}
\pagestyle{empty}

\begin{abstract}

As our ground transportation infrastructure modernizes, the large amount of data being measured, transmitted, and stored motivates an analysis of the privacy aspect of these emerging cyber-physical technologies. In this paper, we consider privacy in the routing game, where the origins and destinations of drivers are considered private. This is motivated by the fact that this spatiotemporal information can easily be used as the basis for inferences for a person's activities. More specifically, we consider the differential privacy of the mapping from the amount of flow for each origin-destination pair to the traffic flow measurements on each link of a traffic network. We use a stochastic online learning framework for the population dynamics, which is known to converge to the Nash equilibrium of the routing game. We analyze the sensitivity of this process and provide theoretical guarantees on the convergence rates as well as differential privacy values for these models. We confirm these with simulations on a small example.

\end{abstract}

\section{Introduction}
\label{sec:intro}


With the decreasing cost and size of technologies, our ground transportation infrastructure is increasingly modernizing with new sensor systems, control algorithms, and actuation modalities. Although these technologies promise great gains in traffic performance, such as level of service or equity~\cite{Board2011}, an unprecedented amount of data is being measured, transmitted, and stored, and an analysis of the privacy aspect of this emerging cyber-physical technology is needed.

There have been a multitude of privacy conceptions in the philosophical and legal literatures. From an engineering perspective, the most commonly used paradigms are \emph{control over information} and \emph{secrecy}~\cite{Solove2002}.

In the abstract, control over information generally requires transparency to the person about what data is being collected and stored, consent to the transmission of this data to any parties, and an ability to correct mistakes in the data. As an example of how this conception works in practice, control over information forms the foundation of the Federal Trade Commission's Fair Information Practices.

On the other hand, secrecy focuses on which new inferences can be made about a person due to the information contained in the data; in this paradigm, a privacy breach occurs when there is a revelation of information that was previously not known, and the person felt that the information was private.

Throughout this paper, our conception of privacy will focus on the secrecy paradigm. In other words, we will focus on what new inferences can be made from the data collected by sensors in ground traffic infrastructures.

In the context of traffic systems, we consider the case where the origin and destination are considered private. This is motivated by the fact that this spatiotemporal information can easily used as the basis for inferences for a person's activities. For example, an executive at the carsharing company Uber claimed he could tell when its users were having an affair in a blog post~\cite{Tufekci2014}.

More specifically, we consider the differential privacy of the mapping from population sizes, i.e. the amount of flow for each origin-destination pair, to the traffic flow measurements on each link of a traffic network.

A popular modeling assumption is that the traffic flow is \emph{atomless}, i.e. a single vehicle cannot unilaterally affect the flows on links~\cite{sandholm2001potential,roughgarden2007,krichene2015learning}. This is designed to match our intuition that, under normal conditions, one vehicle does not contribute significantly to traffic.

However, this implies that, through our models, one vehicle has no effect on the traffic flow measurements. Thus, in this paper, we consider differential privacy with respect to population sizes: how much does traffic flow change when a non-negligible mass of vehicles switch origin-destination pairs?

This framework is applicable for when some aggregator wants to protect the privacy of several drivers. For example, Google can analyze how much it reveals about its users when it provides routes through Google Maps. Alternatively, companies can consider how much is revealed through their shipping patterns, since this detailed data can allow inferences about important business information, such as which consumer markets are being targeted, which companies are in the supply chain, and which locations have potential for future expansion.

To model the dynamics of the driver populations, we use an online learning model in which, at iteration $t$, each population chooses a distribution over its paths. The joint decision of all populations determines the flows over the edges of the network, which, in turn, determines the costs over paths. These costs are then revealed to the populations, and given this information, they can update their distributions. This online learning model has been applied to routing games in~\cite{blum2006routing}, where the authors show that any no-regret strategy is guaranteed to converge to an equilibrium. The same model is also used in~\cite{krichene2015MD}, where the authors show that if each population applies a mirror descent algorithm, the joint distribution converges to a Nash equilibrium.

Our contribution is an analysis of the differential privacy of the dynamics of the driver populations. In this article, we consider a stochastic version of the model in~\cite{krichene2015MD}, in which the populations only have access to a noisy measurement of the path costs. The presence of noise is essential in providing differential privacy, while still guaranteeing convergence to the equilibrium, using results from stochastic optimization~\cite{juditsky2011variational,krichene2015SMD}. 

The rest of the paper is organized as follows. In Section~\ref{sec:diff}, we review the engineering literature on privacy and develop some of the theory of differential privacy. In Section~\ref{sec:route_game}, we introduce the routing game in the context of privacy. In Sections~\ref{sec:smd} and \ref{sec:diff_priv_route}, we provide a learning model based on stochastic mirror descent. Here, we present theory on convergence rates and analyze the differential privacy of the routing game. 
In Section~\ref{sec:numerical}, we present a numerical example and we conclude in Section~\ref{sec:conclusion}.




%


\section{Differential privacy}
\label{sec:diff}


\subsection{Previous work}

Motivated by changing technologies, there has been a lot of recent research considering the issue of privacy. In this section, we will try to summarize the mathematical results in this line of research most relevant to this paper, noting both that the field is too rich for a comprehensive literature review and that privacy is a complicated social phenomenon of which a mathematical model is only one facet.

From a mathematical perspective, there have been several definitions of privacy. We seek to quickly survey a few definitions.

There has been work in inferential privacy, which seeks to bound the probability an adversary with a fixed set of information can correctly infer a hidden parameter, and uses a hypothesis testing model~\cite{Dong2014a}.

Additionally, there has been work in information-theoretic based definitions of privacy, which uses the mutual information between a private parameter and the publicly observable data~\cite{Sankar2011,Salamatian2013} or the conditional entropy of a private parameter given the observables~\cite{Venkitasubramaniam2013}.

Throughout this paper, we will focus on a definition of privacy first introduced in~\cite{Dwork2006}, called \emph{differential privacy}. This definition was originally designed for databases taking values in a finite alphabet, but has since been extended to consider the output of optimization algorithms~\cite{Duchi2012,Hsu2014,Huang2014a,Han2014} and dynamical systems~\cite{LeNy2014}. For a more detailed analysis of the interpretation of differential privacy, we refer the reader to~\cite{Dwork2014}.

Our work is closest to that in~\cite{Han2014}, where the authors considered the differential privacy of constraint sets in the context of gradient descent. Additionally, the work in~\cite{Duchi2012} is of relevance, as it provides several minimax bounds for stochastic mirror descent, considered in this paper.

\subsection{Theory}

In this section, we will formally define differential privacy, as well as present results needed in future sections.

First, let $(\Omega, \Ac, P)$ denote our underlying probability space. Also, let $\Theta$ be a set equipped with a symmetric binary relation $\adj$, called the \emph{adjacency} relation. The set $\Theta$ contains the possible values for a private parameter. Intuitively, the adjacency relation indicates which values should be roughly indistinguishable from the observable data. Although we never consider distributions or measures on $\Theta$, for brevity we will often treat $\Theta$ as a measurable space, where any subset of $\Theta$ is measurable. 

Furthermore, let $(S, \Sc)$ denote a measurable space and let $Y : \Theta \times \Omega \rightarrow S$ be a mapping such that $Y(\theta, \cdot)$ is measurable for every $\theta \in \Theta$. In other words, given $\theta$, $Y(\theta, \cdot)$ is a random element in $S$. For shorthand, we will write $Y_\theta$ to represent $Y(\theta,\cdot)$.

We can now present the definition of differential privacy.

\begin{definition}
	\emph{Differential privacy}: 
	We say a measurable mapping $Y : \Theta \times \Omega \rightarrow S$ is $(\eps,\delta)$-differentially-private if for all measurable sets $B \in \Sc$ and any $\theta, \theta' \in \Theta$ such that $\adj(\theta,\theta')$:
	\begin{equation}
		P(Y_\theta \in B) \leq \exp(\eps) P(Y_{\theta'} \in B) + \delta
	\end{equation}
	If $\delta = 0$, we will say this mapping is $\eps$-differentially-private.
\end{definition}

We note two consequences of this definition. The first lemma appears in~\cite{LeNy2014}.
\begin{lemma}
\label{lemma:expect}
	\emph{\cite{LeNy2014}}: 
	If a mapping $Y$ is $(\eps,\delta)$-differentially private then 
	\begin{equation}
		Eg(Y_\theta) \leq \exp(\eps) Eg(Y_{\theta'}) + \delta
	\end{equation}
	holds for all bounded measurable real-valued functions $g$ and all $\theta, \theta' \in \Theta$ such that $\adj(\theta,\theta')$.
\end{lemma}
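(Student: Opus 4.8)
The plan is to reduce the general bounded-measurable case to the indicator case covered by the definition of differential privacy, using the standard machinery of the "layer-cake" (Fubini-type) representation of an integral. First I would observe that it suffices to prove the claim for nonnegative bounded $g$: for a general bounded $g$ with values in some interval $[m, M]$, I would apply the nonnegative result to $g - m \geq 0$ and then note that the additive constant $m$ contributes equally to both sides (since $m = \exp(\eps)\, m - (\exp(\eps)-1)m$ and $(\exp(\eps)-1)m \geq 0$ when $m \geq 0$; otherwise I would simply shift by a constant large enough to make $g$ nonnegative and track the constant through the inequality). The key point is that shifting by a constant is harmless because the differential privacy bound is one-sided with the favorable $\exp(\eps) \geq 1$ factor, so I must be slightly careful about the sign of the shift, but this is routine bookkeeping.

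The heart of the argument is the layer-cake formula. For a nonnegative bounded measurable $g$, I would write
\begin{equation}
	Eg(Y_\theta) = \int_0^\infty P(g(Y_\theta) > s)\, ds,
\end{equation}
where the integral is over a bounded range since $g$ is bounded. For each fixed level $s$, the set $B_s = \{y \in S : g(y) > s\}$ is measurable because $g$ is measurable, so $B_s \in \Sc$. Then $P(g(Y_\theta) > s) = P(Y_\theta \in B_s)$, and the definition of $(\eps,\delta)$-differential privacy applies directly to the measurable set $B_s$, giving
\begin{equation}
	P(Y_\theta \in B_s) \leq \exp(\eps) P(Y_{\theta'} \in B_s) + \delta
\end{equation}
for every $s$, whenever $\adj(\theta,\theta')$.

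Finally I would integrate this pointwise (in $s$) inequality over $s$. The main subtlety, and the only place requiring care, is the additive $\delta$ term: integrating $\delta$ over $s \in [0, \infty)$ would diverge, so I cannot integrate over the full half-line naively. This is resolved by boundedness of $g$: if $g \leq M$, then $P(g(Y_\theta) > s) = 0$ for $s > M$, so the integral is really over $[0, M]$ and the $\delta$ term integrates to $\delta M$ rather than $\infty$. At first glance this seems to yield $Eg(Y_\theta) \leq \exp(\eps) Eg(Y_{\theta'}) + \delta M$, which is weaker than the stated bound. I expect this scaling issue with $\delta$ to be the main obstacle, and I anticipate the resolution is that the lemma as stated implicitly normalizes $g$ to take values in $[0,1]$ (so that $M = 1$ and the $\delta$ term is exactly $\delta$), matching the convention that differential privacy is naturally phrased for $[0,1]$-valued test functions; alternatively, the intended statement bounds $Eg(Y_\theta) - \delta \cdot \sup g$. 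I would reconcile my computation with the paper's normalization at this step, and otherwise the integration of the pointwise inequality is immediate by monotonicity of the integral.
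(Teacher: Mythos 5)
The paper never actually proves this lemma---it is imported from \cite{LeNy2014} with a citation and no argument---so there is no internal proof to compare against; your layer-cake argument is the standard proof of this fact and almost certainly the one in the cited reference. Its core is correct: for measurable $g$ with $0 \leq g \leq 1$, write $Eg(Y_\theta) = \int_0^1 P(g(Y_\theta) > s)\, ds$, apply the privacy inequality to each level set $B_s = \{g > s\}$, and integrate. Your diagnosis of the $\delta$-scaling issue is also correct, and it is a genuine catch: as literally stated, for arbitrary bounded real-valued $g$, the inequality is false. Your own computation shows the additive term is $\delta \sup g$, and a concrete violation is the pair $P(Y_\theta = 1) = \delta$, $P(Y_\theta = 0) = 1 - \delta$, $Y_{\theta'} \equiv 0$ (which is $(0,\delta)$-differentially-private in both directions) with $g = M \cdot 1_{\{1\}}$: then $Eg(Y_\theta) = M\delta > \delta = \exp(0)Eg(Y_{\theta'}) + \delta$ for $M > 1$. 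The correct statement either restricts to $[0,1]$-valued $g$ or carries $\delta \|g\|_\infty$ on the right. Note that in the only place the paper invokes the lemma---the proof of Proposition~\ref{prop:adapt}---the test function is $g(y) = P(Y_2(\theta',y) \in A_y) \in [0,1]$, so the paper's downstream results are unaffected by this imprecision.

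The one step in your write-up that is actually wrong, rather than routine bookkeeping, is the preliminary reduction from general bounded $g$ to nonnegative $g$ by shifting. That reduction cannot work, because the inequality itself fails for negative-valued $g$ even with $\delta = 0$: take $g \equiv -1$, $\eps > 0$, and $Y_\theta = Y_{\theta'}$; then $Eg(Y_\theta) = -1$ while $\exp(\eps) Eg(Y_{\theta'}) = -\exp(\eps) < -1$. Tracking your shift explicitly, for $g$ valued in $[m, M]$ with $m < 0$ you pick up an extra additive term $(\exp(\eps) - 1)|m|$ on the right-hand side which is not present in the stated bound and cannot be removed; the $\exp(\eps) \geq 1$ factor is ``favorable'' only for nonnegative integrands. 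So the honest conclusion is that the lemma's hypothesis should read ``measurable $g$ with $0 \leq g \leq 1$'' (equivalently, nonnegative $g$ with the bound $\delta\|g\|_\infty$), and with that hypothesis your layer-cake argument, minus the shifting preamble, is a complete and correct proof.
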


The second lemma allows us to use tail bounds when analyzing differential privacy in certain contexts as we will see in Section~\ref{sec:smd}.
\begin{lemma}
\label{lemma:tail_event}
Fix some event $E$. Suppose $P(E) \geq 1 - \delta'$ and that, for all measurable sets $B \in \Sc$ and all $\theta, \theta' \in \Theta$ such that $\adj(\theta,\theta')$:
\[
P(\{Y_\theta \in B\} \cap E) \leq \exp(\eps) P(\{Y_{\theta'} \in B\} \cap E) + \delta
\]
Then, $Y$ is $(\eps, \delta + \delta')$-differentially-private.
\end{lemma}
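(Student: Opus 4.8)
The plan is to bound $P(Y_\theta \in B)$ directly by splitting the probability over the event $E$ and its complement $E^c$, then invoking the hypothesis on the piece living inside $E$ and the probability bound on $P(E)$ for the piece living inside $E^c$. This is the natural way to convert a privacy guarantee that only holds on a ``good'' event into an unconditional one at the cost of an extra additive term.

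First I would write, for any measurable $B$ and any adjacent pair $\theta, \theta'$,
\[
P(Y_\theta \in B) = P(\{Y_\theta \in B\} \cap E) + P(\{Y_\theta \in B\} \cap E^c),
\]
which holds since $E$ and $E^c$ partition $\Omega$. The first term is exactly the quantity controlled by the hypothesis, so I would replace it by $\exp(\eps) P(\{Y_{\theta'} \in B\} \cap E) + \delta$. For the second term, I would note that $\{Y_\theta \in B\} \cap E^c \subseteq E^c$, so monotonicity of $P$ gives $P(\{Y_\theta \in B\} \cap E^c) \leq P(E^c) \leq \delta'$, using the assumption $P(E) \geq 1 - \delta'$. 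Finally, I would discard the intersection with $E$ in the surviving term, using $\{Y_{\theta'} \in B\} \cap E \subseteq \{Y_{\theta'} \in B\}$ together with $\exp(\eps) \geq 0$, to obtain $\exp(\eps) P(\{Y_{\theta'} \in B\} \cap E) \leq \exp(\eps) P(Y_{\theta'} \in B)$. Collecting the three bounds yields
\[
P(Y_\theta \in B) \leq \exp(\eps) P(Y_{\theta'} \in B) + \delta + \delta',
\]
which is precisely $(\eps, \delta + \delta')$-differential privacy; since $B$ and the adjacent pair were arbitrary, the bound holds uniformly.

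I do not anticipate a genuine obstacle here; the only point requiring care is the \emph{direction} of the two relaxations—both the complement bound and the removal of $\cap E$ must inflate, never deflate, the right-hand side, and both do because probabilities are monotone and $\exp(\eps) \geq 0$. The substantive content of the lemma is conceptual rather than technical: it licenses proving a differential-privacy bound only on a high-probability event $E$ (e.g.\ a concentration or tail event) and then paying an additive $\delta'$ for the exceptional event, which is exactly how it will be used in Section~\ref{sec:smd}.
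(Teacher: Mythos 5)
Your proof is correct and follows essentially the same argument as the paper: split $P(Y_\theta \in B)$ over $E$ and $E^c$, apply the hypothesis on $E$, and bound the complement term by $\delta'$. You are in fact slightly more explicit than the paper, which leaves the final relaxation $P(\{Y_{\theta'} \in B\} \cap E) \leq P(Y_{\theta'} \in B)$ implicit.
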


\begin{proof}
Fix any measurable set $B \in \Sc$ and any adjacent $\theta, \theta' \in \Theta$. Then:
\begin{align*}
P(&Y_\theta \in B) \\
& = P( \{Y_{\theta} \in B\} \cap E) + P( \{Y_{\theta} \in B\} \cap E^c) \\
& \leq \exp(\eps) P( \{Y_{\theta'} \in B\} \cap E) + \delta + \delta'
\end{align*}
As desired.
\end{proof}

A result we will use in future sections is how differentially private mappings can be composed.

\begin{proposition}
	\label{prop:adapt}
	\emph{Adaptive composition}: 
	Suppose $Y_1 : \Theta \times \Omega \rightarrow S_1$ is $(\eps_1,\delta_1)$-differentially-private and $Y_2 : \Theta \times S_1 \times \Omega \rightarrow S_2$ is a measurable mapping such that $Y_2(\cdot, s, \cdot)$ is $(\eps_2, \delta_2)$-differentially-private for each fixed $s \in S_1$. Then the mapping $(\theta, \omega) \mapsto (Y_1(\theta,\omega),Y_2(\theta,Y_1(\theta,\omega),\omega))$ is $(\eps_1+\eps_2,\exp(\eps_2)\delta_1+\delta_2)$-differentially-private.
\end{proposition}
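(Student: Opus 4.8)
The plan is to reduce the composition to the two lemmas already established, by slicing the output set $B \subseteq S_1 \times S_2$ along its first coordinate. Writing $Z_\theta = (Y_1(\theta,\cdot),\, Y_2(\theta, Y_1(\theta,\cdot), \cdot))$ for the composed mapping, I would first fix an arbitrary measurable $B$ and adjacent $\theta, \theta'$, and for each $s_1 \in S_1$ introduce the section $B_{s_1} = \{s_2 \in S_2 : (s_1, s_2) \in B\}$. Letting $\mu_\theta$ denote the law of $Y_1(\theta,\cdot)$ on $S_1$ and $\nu_{\theta, s_1}$ the law of $Y_2(\theta, s_1, \cdot)$ on $S_2$, the goal is to justify the disintegration
\[
	P(Z_\theta \in B) = \int_{S_1} \nu_{\theta, s_1}(B_{s_1}) \, \mu_\theta(ds_1),
\]
which rests on the randomness feeding $Y_2$ being independent of that feeding $Y_1$, so that conditioning on $Y_1(\theta,\cdot) = s_1$ leaves the marginal law $\nu_{\theta, s_1}$ of $Y_2(\theta, s_1, \cdot)$ intact.

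With this representation in hand, I would bound the integrand pointwise. Because $Y_2(\cdot, s_1, \cdot)$ is $(\eps_2, \delta_2)$-differentially private and $\theta, \theta'$ are adjacent, for every $s_1$ we have $\nu_{\theta, s_1}(B_{s_1}) \leq \exp(\eps_2)\, \nu_{\theta', s_1}(B_{s_1}) + \delta_2$. Integrating against $\mu_\theta$ then gives
\[
	P(Z_\theta \in B) \leq \exp(\eps_2) \int_{S_1} \nu_{\theta', s_1}(B_{s_1}) \, \mu_\theta(ds_1) + \delta_2.
\]
The remaining difficulty is the mismatch between $\mu_\theta$ and $\mu_{\theta'}$ in the surviving integral. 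To resolve it, set $g(s_1) = \nu_{\theta', s_1}(B_{s_1})$, a function valued in $[0,1]$, and observe that $\int_{S_1} g \, d\mu_\theta = \Exp{g(Y_1(\theta))}$. Since $g$ is bounded and measurable, Lemma~\ref{lemma:expect} applied to the $(\eps_1, \delta_1)$-private mapping $Y_1$ yields
\[
	\Exp{g(Y_1(\theta))} \leq \exp(\eps_1)\, \Exp{g(Y_1(\theta'))} + \delta_1.
\]

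Chaining these two estimates produces
\[
	P(Z_\theta \in B) \leq \exp(\eps_1 + \eps_2) \int_{S_1} \nu_{\theta', s_1}(B_{s_1})\, \mu_{\theta'}(ds_1) + \exp(\eps_2)\delta_1 + \delta_2,
\]
and recognizing the surviving integral as $P(Z_{\theta'} \in B)$ via the same disintegration for $\theta'$ gives exactly the claimed $(\eps_1 + \eps_2,\, \exp(\eps_2)\delta_1 + \delta_2)$ guarantee. I expect the genuine obstacle to lie not in this algebra but in the measure-theoretic bookkeeping: establishing that $s_1 \mapsto \nu_{\theta', s_1}(B_{s_1})$ is measurable, so that it is a legitimate test function for Lemma~\ref{lemma:expect} and the iterated integrals are well defined, and pinning down the independence structure implicit in the single probability space $\Omega$ that validates the disintegration. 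It is precisely the pointwise application of the privacy of $Y_2$, carried out before the change of measure for $Y_1$, that forces the factor $\exp(\eps_2)$ onto $\delta_1$ and thereby explains the asymmetric additive term.
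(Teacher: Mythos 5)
Your proof is correct and takes essentially the same route as the paper's: disintegrate the law of the composed map along the first coordinate, apply the $(\eps_2,\delta_2)$-privacy of $Y_2(\cdot,s_1,\cdot)$ pointwise inside the integral, and then use Lemma~\ref{lemma:expect} with the test function $g(s_1) = \nu_{\theta',s_1}(B_{s_1})$ to swap $Y_1(\theta)$ for $Y_1(\theta')$, which is exactly what forces the $\exp(\eps_2)\delta_1$ term. The measure-theoretic points you flag (measurability of the section probabilities and the independence implicit in the disintegration) are likewise assumed without comment in the paper's proof.
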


\begin{proof}
	Pick any set $A \in \Sc_1 \times \Sc_2$. Let $\mu_1(\theta,\cdot)$ denote the distribution of $Y_1(\theta)$ and $\mu_2(\theta,s,\cdot)$ denote the distribution of $Y_2(\theta,s)$. Furthermore, for $y \in S_1$, let $A_y = \{ y' \in S_2 : (y,y') \in A \}$ denote the slice of $A$ with respect to the first coordinate. Then, for any $\theta, \theta'$ such that $\adj(\theta,\theta')$:
	\begin{align*}
		P[ &( Y_1(\theta),Y_2( \theta, Y_1(\theta) ) ) \in A ] \\
		&= \int \mu_1(\theta,dy_1) P(Y_2(\theta, y_1) \in A_{y_1}) \\
		&\leq \int \mu_1(\theta,dy_1) [\exp(\eps_2) P(Y_2(\theta',y_1) \in A_{y_1}) + \delta_2 ] \\
		&= \exp(\eps_2) \int \mu_1(\theta,dy_1) [ P(Y_2(\theta',y_1) \in A_{y_1})] + \delta_2
	\end{align*}
	Let $g(y) = P(Y_2(\theta',y) \in A_{y})$, and note both that $g$ is a bounded, measurable function and $Eg(Y_1(\theta')) = P( (Y_1(\theta'),Y_2(\theta',Y_1(\theta'))) \in A)$. So, invoking Lemma~\ref{lemma:expect}:
	\begin{align*}
		P[ &( Y_1(\theta),Y_2( \theta, Y_1(\theta) ) ) \in A ] \\
		&\leq \exp(\eps_2) \int \mu_1(\theta,dy_1) [ P(Y_2(\theta',y_1) \in A_{y_1})] + \delta_2 \\
		&= \exp(\eps_2) E g(Y_1(\theta)) + \delta_2 \\
		&\leq \exp(\eps_2) [ \exp(\eps_1) E g(Y_1(\theta')) + \delta_1 ] + \delta_2 \\
		&= \exp(\eps_1 + \eps_2) P( (Y_1(\theta'),Y_2(\theta',Y_1(\theta'))) \in A) + \dots \\
		& \qquad \exp(\eps_2)\delta_1 + \delta_2
	\end{align*}
	As desired.
\end{proof}

Additionally, we can induct on Proposition~\ref{prop:adapt}. For brevity, we will sometimes write $Y_t(\theta,Y_1(\theta),\dots,Y_{t-1}(\theta),\cdot)$ simply as $Y_t(\theta)$.
\begin{corollary}
\label{cor:repeat}
	\emph{Repeated adaptive composition}: 
	Suppose $Y_1 : \Theta \times \Omega \rightarrow S_1$ is $(\eps_1,\delta_1)$-differentially-private and $Y_t : \Theta \times S_1 \times \dots S_{t-1} \times \Omega \rightarrow S_t$ is a measurable mapping such that $Y_t(\cdot,s_1,\dots,s_{t-1},\cdot)$ is $(\eps_t, \delta_t)$-differentially-private  for each fixed $(s_1,\dots,s_{t-1}) \in S_1 \times \dots \times S_{t-1}$ and $1 < t \leq T$.
	
	Then, the mapping $(\theta,\omega) \mapsto (Y_1(\theta),Y_2(\theta),\dots,Y_T(\theta))$ is $\left( \sum_{t = 1}^T \eps_t, \sum_{t = 1}^T \exp\left[ \sum_{t' = t+1}^T \eps_{t'} \right]\delta_t \right)$-differentially-private.
\end{corollary}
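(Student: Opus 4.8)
The plan is to proceed by induction on $T$, at each stage peeling off the last mapping $Y_T$ and treating the tuple of the first $T-1$ outputs as a single composite mapping to which Proposition~\ref{prop:adapt} applies. For the base case $T = 1$ the composite mapping is simply $Y_1$, which is $(\eps_1, \delta_1)$-differentially-private by hypothesis; this matches the claimed parameters, since $\sum_{t=1}^1 \eps_t = \eps_1$ and the lone $\delta$-term carries the empty exponential sum $\exp[0] = 1$, yielding $\delta_1$.

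For the inductive step I would assume the result at stage $T-1$ and define the composite mapping $Z : \Theta \times \Omega \rightarrow S_1 \times \dots \times S_{T-1}$ by $Z(\theta, \omega) = (Y_1(\theta), \dots, Y_{T-1}(\theta))$. By the inductive hypothesis, $Z$ is $(\eps', \delta')$-differentially-private with $\eps' = \sum_{t=1}^{T-1} \eps_t$ and $\delta' = \sum_{t=1}^{T-1} \exp[\sum_{t'=t+1}^{T-1} \eps_{t'}] \delta_t$. The key observation is that $Y_T$, viewed as a mapping $\Theta \times (S_1 \times \dots \times S_{T-1}) \times \Omega \rightarrow S_T$, is conditionally $(\eps_T, \delta_T)$-differentially-private for each fixed tuple $(s_1, \dots, s_{T-1})$ — which is exactly the hypothesis imposed on the second factor in Proposition~\ref{prop:adapt}, now with the intermediate space $S_1$ of that proposition replaced by the product $S_1 \times \dots \times S_{T-1}$. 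Applying Proposition~\ref{prop:adapt} to the pair $(Z, Y_T)$ then shows that $(\theta, \omega) \mapsto (Z(\theta, \omega), Y_T(\theta, Z(\theta, \omega), \omega)) = (Y_1(\theta), \dots, Y_T(\theta))$ is $(\eps' + \eps_T, \exp(\eps_T)\delta' + \delta_T)$-differentially-private.

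It then remains to check that these parameters agree with the claimed formula at stage $T$. The $\eps$-component is immediate: $\eps' + \eps_T = \sum_{t=1}^T \eps_t$. The $\delta$-component is where the bookkeeping matters: distributing $\exp(\eps_T)$ over the sum defining $\delta'$ absorbs $\eps_T$ into each exponent, giving $\exp(\eps_T)\delta' = \sum_{t=1}^{T-1} \exp[\sum_{t'=t+1}^{T} \eps_{t'}] \delta_t$, and then adding $\delta_T$ — which is precisely the $t = T$ term of the target sum, its exponent being the empty sum over $t' = T+1, \dots, T$ — reconstitutes $\sum_{t=1}^T \exp[\sum_{t'=t+1}^T \eps_{t'}] \delta_t$. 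I expect this re-indexing of the exponential weights, namely verifying that the factor $\exp(\eps_T)$ introduced at each composition step telescopes into the correct cumulative exponent $\sum_{t'=t+1}^T \eps_{t'}$, to be the only real obstacle; the measurability and conditional-privacy hypotheses transfer to the grouped mapping $Z$ without difficulty.
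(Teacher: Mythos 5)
Your proof is correct and follows exactly the route the paper intends: the paper justifies the corollary only by remarking that one can ``induct on Proposition~\ref{prop:adapt}'', and your argument—grouping $(Y_1,\dots,Y_{T-1})$ into a single mapping $Z$, applying the adaptive composition proposition to the pair $(Z, Y_T)$, and verifying that the factor $\exp(\eps_T)$ distributes over the inductive $\delta$-sum to produce the cumulative exponents $\sum_{t'=t+1}^{T}\eps_{t'}$—is precisely that induction, carried out with the correct bookkeeping.
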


%
%
%

Finally, we note that the Gaussian distribution guarantees differential privacy.

\begin{definition}
	\emph{Sensitivity}: 
	The $\ell_2$ sensitivity of a function $f : \Theta \to \Rbb$ is given by:
	\begin{equation}
		\Delta_2 f = \sup_{\theta, \theta' \in \Theta : \adj(\theta,\theta')} \|f(\theta) - f(\theta')\|_2
	\end{equation}
\end{definition}

\begin{definition}
	
	The zero-mean Gaussian distribution on $\R$ with variance parameter $\sigma^2$, denoted $\gauss(\sigma^2)$, has the density
	\begin{equation}
		y \mapsto \frac{1}{\sqrt{(2\pi \sigma^2)}} \exp\left(\frac{-|y|^2}{2\sigma^2}\right)
	\end{equation}
	with respect to the Lebesgue measure.
\end{definition}



\begin{proposition}
\label{prop:gauss_mech}
	\emph{Gaussian mechanism~\cite{Dwork2014}}: 
	For $\eps \in (0,1)$, and $b^2 > 2 \ln(1.25/\delta)$, the mapping $Y_\theta = f(\theta) + Z$, where $Z_i \simiid \gauss(\sigma^2)$ for some $\sigma \geq b \Delta_2 f/\eps$, 
	 is $(\eps,\delta)$-differentially-private.
\end{proposition}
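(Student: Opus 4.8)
The plan is to follow the standard analysis of the Gaussian mechanism through the \emph{privacy loss random variable}. Fix adjacent $\theta, \theta'$ and write $\mu = f(\theta)$, $\mu' = f(\theta')$, and $v = \mu - \mu'$, so that $\|v\|_2 \leq \Delta_2 f$ by the definition of sensitivity. Let $p_\theta$ and $p_{\theta'}$ denote the Lebesgue densities of $Y_\theta = \mu + Z$ and $Y_{\theta'} = \mu' + Z$; since $Z \sim \gauss(\sigma^2)$ coordinatewise, both are spherical Gaussians differing only in their mean. Writing out the log-density ratio, the $\|y\|_2^2$ terms cancel, leaving an affine function of $y$; substituting $y = \mu + Z$ gives
\[
L := \ln\frac{p_\theta(y)}{p_{\theta'}(y)} = \frac{\braket{Z}{v}}{\sigma^2} + \frac{\|v\|_2^2}{2\sigma^2}.
\]
Because $\braket{Z}{v} \sim \gauss(\sigma^2\|v\|_2^2)$, under $y \sim Y_\theta$ the loss $L$ is itself Gaussian with mean $\|v\|_2^2/(2\sigma^2)$ and variance $\|v\|_2^2/\sigma^2$; in particular all the relevant randomness is one-dimensional.

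The crux of the argument, and the step I expect to require the most care, is the tail estimate $P(L > \eps) \leq \delta$. Standardizing, $P(L > \eps) = P(N > t)$ with $N \sim \gauss(1)$ and $t = \eps\sigma/\|v\|_2 - \|v\|_2/(2\sigma)$. Since $t$ is decreasing in $\|v\|_2$ and increasing in $\sigma$, the worst case is $\|v\|_2 = \Delta_2 f$ together with the smallest admissible $\sigma = b\,\Delta_2 f/\eps$, which yields $t = b - \eps/(2b)$. Applying the Gaussian tail bound $P(N > t) \leq \tfrac12 e^{-t^2/2}$ and expanding $t^2 = b^2 - \eps + \eps^2/(4b^2)$ gives $P(L > \eps) \leq \tfrac12 e^{\eps/2} e^{-b^2/2}$; here I would use $\eps < 1$ to bound $e^{\eps/2} \leq \sqrt{e}$ and the hypothesis $b^2 > 2\ln(1.25/\delta)$ to bound $e^{-b^2/2} < \delta/1.25$. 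The surviving constant $\sqrt{e}/2.5 < 1$ is exactly what closes the inequality, and it is why the factor $1.25$ appears in the hypothesis; checking that the constants leave enough slack is the main piece of bookkeeping.

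With the tail bound in hand, I would conclude by a splitting argument over the output space. For any measurable $B$, let $G = \{y : p_\theta(y) \leq e^\eps p_{\theta'}(y)\}$ be the ``good'' set on which the density ratio is controlled. Then
\[
P(Y_\theta \in B) = \int_{B \cap G} p_\theta + \int_{B \cap G^c} p_\theta \leq e^\eps \int_{B \cap G} p_{\theta'} + P(Y_\theta \in G^c).
\]
The first term is at most $e^\eps P(Y_{\theta'} \in B)$, while the second equals $P(L > \eps) \leq \delta$, giving $P(Y_\theta \in B) \leq e^\eps P(Y_{\theta'} \in B) + \delta$, which is the claimed $(\eps,\delta)$-differential privacy. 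Alternatively, one could route this last step through Lemma~\ref{lemma:tail_event}, taking $E = \{L \leq \eps\}$ for the fixed pair $\theta, \theta'$, but the direct decomposition is self-contained and avoids worrying about whether a single event $E$ serves all adjacent pairs simultaneously.
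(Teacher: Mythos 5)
The paper never proves this proposition: it is imported by citation from Dwork and Roth's monograph, so there is no internal proof to compare against. Your proposal is, in substance, the classical argument from that cited source, and it is correct: the computation of the privacy loss $L = \braket{Z}{v}/\sigma^2 + \|v\|_2^2/(2\sigma^2)$ is right (the quadratic terms cancel exactly as you say); under $Y_\theta$ it is indeed $\gauss$-distributed in one dimension with mean $\|v\|_2^2/(2\sigma^2)$ and variance $\|v\|_2^2/\sigma^2$; the reduction of the worst case to $\|v\|_2 = \Delta_2 f$ and $\sigma = b\Delta_2 f/\eps$ by monotonicity of $t = \eps\sigma/\|v\|_2 - \|v\|_2/(2\sigma)$ is valid; the bookkeeping $t^2 = b^2 - \eps + \eps^2/(4b^2)$, $e^{\eps/2}\le \sqrt e$, $e^{-b^2/2} < \delta/1.25$, $\sqrt e/2.5 < 1$ checks out; and the final splitting of $P(Y_\theta \in B)$ over the good set $G$ and its complement is exactly the standard conversion of a privacy-loss tail bound into $(\eps,\delta)$-differential privacy. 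Your instinct to argue directly rather than through Lemma~\ref{lemma:tail_event} is also sound, since the good set $G$ depends on the pair $(\theta,\theta')$, whereas that lemma is phrased for a single event $E$.

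One piece of fine print: the tail bound $P(N > t) \le \tfrac12 e^{-t^2/2}$ is valid only for $t \ge 0$, i.e. only when $b^2 \ge \eps/2$. The hypotheses force this except in a degenerate corner: $b^2 < \eps/2 < 1/2$ together with $b^2 > 2\ln(1.25/\delta)$ requires $\delta > 1.25\,e^{-1/4} \approx 0.97$. In that corner your displayed inequality fails, but the conclusion survives by direct inspection: since $b > \sqrt{2\ln 1.25} \approx 0.668$ and $\eps < 1$, one has $t \ge b - 1/(2b) > -0.1$, so $P(N > t) < P(N > -0.1) < 0.55 < \delta$, and the splitting argument goes through unchanged. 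This wrinkle is inherited from the classical proof itself (it is among the issues later documented in the literature on the analytic Gaussian mechanism); adding one sentence that disposes of the case $t < 0$ would make your write-up airtight.
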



\section{The Routing Game}
\label{sec:route_game}


The routing game is given by:
\begin{itemize}
\item a directed graph $G = (V, E)$,
\item a set of non-decreasing, Lipschitz continuous edge cost functions $c_e: \Rbb_+ \to \Rbb_+$, $e \in E$,
\item a finite set of origin-destination pairs $(o_i, d_i) \in V \times V$, indexed by $i \in \{1, \dots, I\}$,
\item and a finite set of populations $P_k$, indexed by $k \in \{1, \dots, K\}$.
\end{itemize}
In a ground transportation setting, the nodes, i.e. elements in $V$, represent physical locations, and edges, i.e. elements in $E$, represent the roadways that connect two locations. The edge cost functions $c_e$ correspond to the amount of time taken when traveling along an edge $e$, and the non-decreasing assumption corresponds to the physical intuition that congestion worsens travel time. 
Finally, each population represents some aggregator that manages flows for all origin-destination pairs, such as Google or Waze. 

For a given origin-destination pair $(o_i, d_i)$, let $\Pcal_i$ be the set of simple paths connecting $o_i$ to $d_i$, and let $M_i \in \Rbb^{|E| \times |\Pcal_i|}$ be the edge-path incidence matrix, defined as follows:
\begin{equation}
\forall (e, p) \in E \times \Pcal_i, \ (M_i)_{e, p} = \begin{cases}
1 & \text{if $e \in p$} \\
0 & \text{otherwise.}
\end{cases}
\end{equation}

A population $P_k$ is given by a private vector $\theta_k \in \Rbb_+^{I}$, which specifies, for each origin-destination pair $(o_i, d_i)$, the total mass of traffic $(\theta_k)_i$ that belongs to this population, and that travels from $o_i$ to $d_i$. 
We assume there is some upper bound on the total size of the populations. Furthermore, we will define an adjacency relationship between private vectors.

\begin{assumption}
\label{ass:bound_theta}
It is common knowledge that $\theta$ is bounded. That is, there exists an $A_\theta < \infty$ such that, for every population $k$, $\| \theta_k \|_\infty \leq A_{\theta}$, and each population and outside observers know this bound.
\end{assumption}

\begin{definition}
\label{def:adj}
Two private parameters of populations $(\theta_k)_{k \in [K]}$ and $(\theta_k')_{k \in [K]}$ are \emph{adjacent} if there exists a $k^*$ such that $\theta_k = \theta_k'$ for $k \neq k^*$ and:
\[
\| \theta_{k^*} - \theta_{k^*}' \|_\infty \leq c
\]
\end{definition}

Recall that the adjacency relationship provides defines which pairs of private parameters should be roughly indistinguishable. Here, $c$ is a constant that will be determined by the populations, modeling the maximum amount that a single population can increase or decrease the flow in one origin-destination pair without having a significant effect on observable data.

The action set of population $P_k$ is a distribution vector $x_k \in \simplexProd$, where
\[
\Delta^{\Pcal_i} = \left\{m \in \Rbb_+^{|\Pcal_i|} : \sum_{p \in \Pcal_i} m_p = 1 \right\}
\]
is the set of probability distributions over $\Pcal_i$. In other words, every population chooses, for each origin-destination pair $(o_i, d_i)$, how to distribute its mass across the available paths $\Pcal_i$. For notational convenience, we will write $(x_k)_{\Pcal_i}$ to denote the sub-vector $((x_k)_p)_{p \in \Pcal_i} \in \Delta^{\Pcal_i}$, so that $x_k = ((x_k)_{\Pcal_1}, \dots, (x_k)_{\Pcal_I})$.

The flow allocations of all populations $(x_k)_{k \in [K]}$ determine the edge flows, defined as follows: the flow on edge $e$ is $\phi_e(x_1, \dots, x_K) = \sum_{k = 1}^K \sum_{i = 1}^I (\theta_k)_i \sum_{p \in \Pcal_i} (x_k)_p 1_{(e \in p)}$. The vector of edge flows can be written simply in terms of the incidence matrices:
\[
\phi(x_1, \dots, x_K) = \sum_{k=1}^K \sum_{i = 1}^I (\theta_k)_i M_i (x_k)_{\Pcal_i}
\]
The edge flows and edge costs determine the path costs. That is, the cost on path $p \in \Pcal_i$ is given by:
\[
\ell_p(x_1, \dots, x_K) = \sum_{e \in p} c_e(\phi_e(x_1, \dots, x_K))
\]
We will denote by $\ell_{\Pcal_i}(x_1, \dots, x_K)$ the vector $(\ell_{p}(x_1, \dots, x_K))_{p \in \Pcal_i}$, and $\ell = (\ell_{\Pcal_1}, \dots, \ell_{\Pcal_I}) \in \Rbb_+^{\Pcal_1} \times \dots \times \Rbb_+^{\Pcal_I}$.

Finally, the cost for population $P_k$ under distributions $x_1, \dots, x_K$ is
\[
\sum_{i = 1}^I (\theta_k)_i \sum_{p \in \Pcal_i} ((x_k)_{\Pcal_i})_p \ell_{p}(x_1, \dots, x_K)
\]
which we will denote, more concisely, as
\[
\braket{x_k}{\ell(x_1, \dots, x_K)}_{\theta_k}
\]
where we define the inner product as follows: for all $x, y \in \Rbb^{\Pcal_1} \times \dots \times \Rbb^{\Pcal_I}$
\begin{equation}
\braket{x}{y}_\theta = \sum_{i = 1}^I \theta_i \sum_{p \in \Pcal_i} x_p y_p
\end{equation}

\subsection{Nash equilibria and the Rosenthal potential function}
\begin{definition}
\label{def:nash}
A collection of population distributions $(x_k)_{k \in [K]}$ is a \emph{Nash equilibrium} (also called \emph{Wardrop equilibrium} in the traffic literature), if for every $k \in [K]$ and every $y \in \simplexProd$:
\[
\braket{x_k}{\ell(x_1, \dots x_K)}_{\theta_k} \leq \braket{y}{\ell(x_1, \dots x_K)}_{\theta_k}
\]
That is, no driver can improve their cost by unilaterally changing their path.
\end{definition}
Next, we show that the set of Nash equilibria of the game are exactly the set of minimizers of the Rosenthal potential, defined as follows:
\[
f(x_1, \dots, x_K) = \sum_{e \in E} \int_0^{\phi_e(x_1, \dots, x_K)} c_e(u)du 
\]
\begin{proposition}
\label{prop:potential_gradient}
The Rosenthal potential is convex, and its gradient with respect to $x_k$ is:
\[
\nabla_{x_k} f(x_1, \dots, x_K) = \sum_{i = 1}^I (\theta_k)_{i} \ell_{\Pcal_i}(x_1, \dots, x_K)
\]
\end{proposition}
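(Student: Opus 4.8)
The plan is to write the potential as a sum of one-dimensional integrals composed with the linear edge-flow map, and then treat convexity and the gradient separately. For each edge $e$, define $F_e(\phi) = \int_0^{\phi} c_e(u)\,du$, so that $f(x_1, \dots, x_K) = \sum_{e \in E} F_e(\phi_e(x_1, \dots, x_K))$. Since each $c_e$ is Lipschitz continuous it is in particular continuous, so the fundamental theorem of calculus gives that $F_e$ is continuously differentiable with $F_e'(\phi) = c_e(\phi)$. This structural decomposition is the one ingredient that makes both claims routine.

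For convexity, I would first observe that $\phi_e(x_1, \dots, x_K) = \sum_{k=1}^K \sum_{i=1}^I (\theta_k)_i \sum_{p \in \Pcal_i} (x_k)_p 1_{(e \in p)}$ is a linear function of $(x_1, \dots, x_K)$. Next, since $c_e$ is non-decreasing, $F_e$ has a non-decreasing derivative and is therefore convex on $\Rbb_+$. Convexity is preserved under composition with a linear map and under summation, so $f = \sum_{e \in E} F_e \circ \phi_e$ is convex. This part is essentially immediate once the decomposition is in place.

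The gradient is the computation that requires care with index bookkeeping. Applying the chain rule to $f = \sum_{e \in E} F_e(\phi_e)$, the partial derivative with respect to a single coordinate $(x_k)_p$ with $p \in \Pcal_i$ is $\frac{\partial f}{\partial (x_k)_p} = \sum_{e \in E} c_e(\phi_e) \frac{\partial \phi_e}{\partial (x_k)_p}$. From the explicit formula for $\phi_e$, only the term with matching population index $k$, pair index $i$, and path $p$ survives, giving $\frac{\partial \phi_e}{\partial (x_k)_p} = (\theta_k)_i 1_{(e \in p)}$. Substituting, $\frac{\partial f}{\partial (x_k)_p} = (\theta_k)_i \sum_{e \in p} c_e(\phi_e) = (\theta_k)_i \ell_p(x_1, \dots, x_K)$, by the definition of the path cost $\ell_p$. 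Collecting these coordinates into the sub-vector indexed by $\Pcal_i$ yields $(\theta_k)_i \ell_{\Pcal_i}(x_1, \dots, x_K)$, and stacking over $i \in \{1, \dots, I\}$ gives the claimed expression for $\nabla_{x_k} f$.

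The only genuine subtlety, rather than an obstacle, is justifying the differentiation of $F_e$, which follows from continuity of each $c_e$ (via the Lipschitz assumption) together with the linearity of $\phi_e$; beyond that the proof is just careful tracking of which indices contribute to each partial derivative.
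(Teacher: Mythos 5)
Your proof is correct. Note that the paper itself states Proposition~\ref{prop:potential_gradient} without proof, so there is no argument to compare against; your write-up is the standard (and essentially the only natural) argument: decompose $f = \sum_{e \in E} F_e \circ \phi_e$ with $F_e(\phi) = \int_0^\phi c_e(u)\,du$, get convexity from the non-decreasing derivative $F_e' = c_e$ composed with the linear map $\phi_e$, and get the gradient from the chain rule plus the observation that $\partial \phi_e / \partial (x_k)_p = (\theta_k)_i 1_{(e \in p)}$, which correctly recovers $(\theta_k)_i \ell_{\Pcal_i}$ blockwise.
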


%

\begin{corollary}
The set of Nash equilibria of the game is exactly the set of solutions of the following convex problem:
\begin{equation}
\label{eq:potential_problem}
\begin{aligned}
&\text{minimize} && f(x_1, \dots, x_K) \\
&\text{subject to }&& x_k \in \simplexProd \text{ for all } k \in [K]
\end{aligned}
\end{equation}
\end{corollary}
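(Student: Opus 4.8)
The plan is to invoke the standard first-order optimality characterization for minimizing a convex differentiable function over a convex feasible set, and then to show that this characterization coincides, block by block, with the Nash equilibrium condition of Definition~\ref{def:nash}. By Proposition~\ref{prop:potential_gradient}, $f$ is convex and differentiable, and the feasible set---the $K$-fold product of the sets $\simplexProd$ over populations---is convex. Hence a feasible collection $(x_1, \dots, x_K)$ solves~\eqref{eq:potential_problem} if and only if the variational inequality
\[
\sum_{k=1}^K \langle \nabla_{x_k} f(x_1, \dots, x_K), y_k - x_k \rangle \geq 0
\]
holds for every feasible $(y_1, \dots, y_K)$, where $\langle \cdot, \cdot \rangle$ denotes the ordinary Euclidean inner product on the path space.

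First I would exploit the product structure of the constraint set. Since the populations' distributions live in a Cartesian product of simplices, one may vary a single block $y_k$ while holding the others fixed at the $x_{k'}$; this shows the joint variational inequality is equivalent to the collection of per-population inequalities $\langle \nabla_{x_k} f, y_k - x_k \rangle \geq 0$ for all $y_k \in \simplexProd$ and all $k \in [K]$. The converse direction, summing the per-block inequalities to recover the joint one, is immediate.

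Next I would translate each per-population inequality into the Nash condition. Using the gradient formula $\nabla_{x_k} f = \sum_{i=1}^I (\theta_k)_i \ell_{\Pcal_i}$ from Proposition~\ref{prop:potential_gradient} together with the definition of the weighted inner product, one verifies that for any $y \in \simplexProd$,
\[
\langle \nabla_{x_k} f, y \rangle = \sum_{i=1}^I (\theta_k)_i \sum_{p \in \Pcal_i} \ell_p \, y_p = \braket{y}{\ell}_{\theta_k}.
\]
Substituting this identity shows that $\langle \nabla_{x_k} f, y_k - x_k \rangle \geq 0$ for all $y_k$ is precisely the statement $\braket{x_k}{\ell}_{\theta_k} \leq \braket{y}{\ell}_{\theta_k}$ for all $y$, which is the Nash equilibrium condition. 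Collecting both directions yields the claimed equivalence.

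The main obstacle is not any single difficult step but rather the bookkeeping needed to verify the gradient--inner-product identity cleanly: one must track how the per-origin-destination weights $(\theta_k)_i$ distribute across the path components of a block, and confirm that the Euclidean pairing with $\nabla_{x_k} f$ reproduces exactly the $\theta_k$-weighted pairing $\braket{\cdot}{\cdot}_{\theta_k}$ appearing in Definition~\ref{def:nash}. Once that identity is in hand, the equivalence follows directly from convexity and the separability of the feasible set.
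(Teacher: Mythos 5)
Your proof is correct and takes exactly the route the paper intends: the result is stated as an unproved corollary of Proposition~\ref{prop:potential_gradient}, and your argument---first-order optimality (the variational inequality) for a convex differentiable function over the convex product of simplices, decomposed block-by-block and translated into the condition of Definition~\ref{def:nash} via the identity $\langle \nabla_{x_k} f, y \rangle = \braket{y}{\ell}_{\theta_k}$---is precisely the standard derivation that this placement implies. Both of your key steps (the equivalence of the joint and per-population variational inequalities over a product set, and the weighted-inner-product identity) are verified correctly, so nothing is missing.
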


\section{Stochastic mirror descent dynamics and convergence to Nash equilibria}
\label{sec:smd}


\subsection{Online learning model}

We consider the following online learning model of the game: at each iteration $t \in \{1, 2, \dots, T\}$, every population $P_k$ chooses a distribution vector $x_k^{(t)} \in \simplexProd$. The combined choice of all populations determines the path loss vector $\ell(x^{(t)}_1, \dots, x_K^{(t)})$, which we will denote simply by $\ell^{(t)}$. The loss of population $k$ is then given by the inner product $\braket{\ell^{(t)}}{x^{(t)}_k}_{\theta_k}$.

At the end of iteration $t$, a stochastic loss vector $\hat \ell^{(t)}$, is revealed to all populations. Intuitively, one can think of $\hat \ell^{(t)}$ as a noisy version of $\ell^{(t)}$. The precise assumptions on the process $(\hat \ell^{(t)})$ will be given in Assumption~\ref{assumption:ell_hat}.

\subsection{Population dynamics}
Our population dynamics take the following form.
\begin{assumption}
We assume that for each population $P_k$, the stochastic process $(x_k^{(t)})$ follows stochastic mirror descent dynamics, given in Algorithm~\ref{alg:smd_dynamics}. 
\end{assumption}

These dynamics correspond to a stochastic version of the dynamics used in~\cite{krichene2015MD}.

\begin{algorithm}[t]
\begin{algorithmic}
{\small
\FOR{$t \in \{0, \dots, T-1\}$}
\STATE Observe $\hat \ell^{(t)}$
\STATE Update
\[
x_k^{(t+1)} = \argmin_{x_k \in \simplexProd} \braket{ \hat \ell^{(t)}}{x_k }_{\theta_k} + \frac{1}{\eta_k^{(t)}} D_{\psi_k}(x_k, x_k^{(t)})
\]
\ENDFOR%
}
\end{algorithmic}
\caption{Stochastic mirror descent dynamics for population $k$, with initial distribution $x_k^{(0)}$, learning rates $(\eta^{(t)}_k)$, and distance generating function $\psi_k$.}
\label{alg:smd_dynamics}
\end{algorithm}

Here, $\psi_k$ is a distance generating function defined and $C^1$ on $\simplexProd$, and $D_{\psi_k}$ is the Bregman divergence induced by $\psi_k$, defined as follows:
\begin{multline}
\label{eq:Bregman}
D_{\psi_k}(x_k, y_k) = \psi(x_k) - \psi(y_k) - \braket{\nabla \psi(y_k)}{x_k-y_k}
\end{multline}

\begin{assumption}
For all $k$, $\psi_k$ is strongly convex with respect to a reference norm $\| \cdot \|$. That is, there exists $\ell_{\psi_k} > 0$ such that for all $x_k, y_k \in\simplexProd$:
\[
D_{\psi_k}(x_k, y_k) \geq \frac{\ell_{\psi_k}}{2} \|x_k - y_k\|^2
\]
\end{assumption}
See Chapter 11 in~\cite{cesa2006prediction} for an account of the properties of Bregman divergences. We will further assume that the norm $\|\cdot\|$ decomposes into a sum of norms defined on each of the simplexes.
\begin{assumption} 
\label{ass:norm_sum}
The norm $\|\cdot\|$ on $\R^{\Pcal_1} \times \dots \times \R^{\Pcal_I}$ can be decomposed as follows:
\[
\|x_k\| = \sum_{i \in I} \|(x_k)_{\Pcal_i}\|
\]
\end{assumption}

Mirror descent is a general class of first-order optimization methods, used extensively both in convex optimization~\cite{nemirovski1983problem} and online learning~\cite{cesa2006prediction,bubeck2012regret}. In particular, projected gradient descent and entropic descent (a.k.a. the Hedge algorithm) are instances of the mirror descent method, for the appropriate choices of the distance generating function (see, for example,~\cite{Beck2003mirror}).

In our model, since each population is updating its distribution vector using mirror descent dynamics, we can write the joint update as follows
\begin{align}
(&x^{(t+1)}, \dots, x_K^{(t+1)}) \notag \\
&= \argmin_{x} \sum_{k} \braket{\ell^{(t)}}{x_k}_{\theta_k} + \sum_{k} \frac{1}{\eta_k^{(t)}} D_{\psi_k}(x_k, x^{(t)}_k) \notag \\
&= \argmin_{x} \braket{\nabla f(x^{(t)})}{x} + \sum_{k} \frac{1}{\eta_k^{(t)}} D_{\psi_k}(x_k, x^{(t)}_k) \notag \\
&= \argmin_{x} f(x^{(t)}) + \braket{\nabla f(x^{(t)})}{x - x^{(t)}} + D^{(t)}(x, x^{(t)})
\label{eq:joint_dynamics}
\end{align}
where the minimization is taken across $x$ in $(\simplexProd)^K$ and we used the expression of the gradient $\nabla f(x^{(t)})$, given in Proposition~\ref{prop:potential_gradient}, and defined:
\[
D^{(t)}(x, x^{(t)}) = \sum_{k} \frac{1}{\eta_k^{(t)}} D_{\psi_k}(x_k, x^{(t)}_k)
\]
The expression~\eqref{eq:joint_dynamics} can be interpreted as a local approximation of the potential function $f$: the first term $f(x^{(t)}) + \braket{\nabla f(x^{(t)})}{x - x^{(t)}}$ is simply the linear approximation of $f$ around $x^{(t)}$, and the second term $D^{(t)}(x, x^{(t)})$ is a strongly convex function which penalizes deviation from the previous iterate $x^{(t)}$. By this observation, one can think of the joint dynamics of all populations as implementing a stochastic mirror descent on the Rosenthal potential $f$.

\subsection{Suboptimality bounds on stochastic mirror descent}
We now review some guarantees of the stochastic mirror descent method. First, we need to make assumptions on the stochastic process $(\hat \ell^{(t)})$ and the distance generating functions $\psi_k$.

\begin{assumption}
\label{assumption:ell_hat}
Throughout the paper, we will assume that:
\begin{enumerate}
\item For all $t$, $\hat \ell^{(t)}$ is unbiased, that is, $\Exp{\hat \ell^{(t)} | \Fcal_{t-1}} = \ell^{(t)}$, where $(\Fcal_{t})$ is the natural filtration of the process $(\hat \ell^{(t)})$.
\item $\hat \ell^{(t)}$ is uniformly bounded in the squared dual norm, that is, there exists $L$ such that for all $t$:
\[
\Exp{\|\ell^{(t)}\|^2_*} \leq L
\]
where $\|\cdot\|_*$ is the dual norm defined as follows:
\[
\|\ell\|_* = \sup_{\|x\| \leq 1} \braket{x}{\ell}
\]
\item For all $k$, there exists $D_k$ such that $D_{\psi_k}$ is bounded on $\simplexProd$ by $D_k$.
\end{enumerate}
\end{assumption}

\begin{proposition}[Theorem~4 in~\cite{krichene2015SMD}]
\label{prop:SMD_subopt}
Suppose that each population $P_k$ follows a stochastic mirror descent dynamics as in Algorithm~\ref{alg:smd_dynamics}, and suppose that the learning rates are given by $\eta_k^{(t)} = c_k t^{-\alpha_k}$ with $c_k > 0$ and $\alpha_k \in (0, 1)$. Then for all $t \ge 1$, it holds that:
\begin{align*}
&\Exp{f(x^{(t)})} - f^\star \\
&\leq \parenth{1 + \sum_{\tau=1}^t \frac{1}{\tau} } 
\sum_{k = 1}^K \parenth{ \frac{1}{t^{1-\alpha_k}} \frac{D_k}{c_k} 
+ \frac{c_k L}{2\ell_{\psi_k}(1-\alpha_k)} \frac{1}{t^{\alpha_k}} }
\end{align*}
In particular, the system converges to the set of Nash equilibria in expectation, in the sense that $\Exp {f(x^{(t)})} \to f^\star$ at the rate $\Ocal(t^{-\bar \alpha} \log t )$ where $\bar \alpha = \min_k \min(\alpha_k, 1-\alpha_k)$.
\end{proposition}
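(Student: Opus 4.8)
The plan is to exploit the observation recorded in~\eqref{eq:joint_dynamics} that the joint update of all populations is exactly a stochastic mirror descent step on the Rosenthal potential $f$, with the separable distance-generating term $D^{(t)}(\cdot, x^{(t)}) = \sum_k \frac{1}{\eta_k^{(t)}} D_{\psi_k}(\cdot, x_k^{(t)})$ and a stochastic gradient induced by $\hat\ell^{(t)}$ whose conditional mean is $\nabla f(x^{(t)})$ by Proposition~\ref{prop:potential_gradient} and Assumption~\ref{assumption:ell_hat}(1). The analysis then proceeds in three stages: a one-step prox inequality, an ergodic (average-iterate) bound reproducing the bracketed sum in the statement, and finally the passage to the last iterate, which is where the factor $1 + \sum_{\tau=1}^t 1/\tau$ appears.

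First I would establish the standard one-step inequality. Since the update in Algorithm~\ref{alg:smd_dynamics} is separable across populations, I treat each $k$ independently: writing $\hat g_k^{(\tau)}$ for the vector representing $\braket{\hat\ell^{(\tau)}}{\cdot}_{\theta_k}$, the optimality condition of the argmin together with the strong convexity of $D_{\psi_k}$ gives, for any minimizer $x^\star$ and every $\tau$,
\[
\braket{\hat g_k^{(\tau)}}{x_k^{(\tau)} - x_k^\star} \leq \frac{1}{\eta_k^{(\tau)}}\parenth{D_{\psi_k}(x_k^\star, x_k^{(\tau)}) - D_{\psi_k}(x_k^\star, x_k^{(\tau+1)})} + \frac{\eta_k^{(\tau)}}{2\ell_{\psi_k}}\|\hat g_k^{(\tau)}\|_*^2 .
\]
Summing over $k$ and taking expectations, Assumption~\ref{assumption:ell_hat}(1) lets me replace $\hat g^{(\tau)}$ by $\nabla f(x^{(\tau)})$ in conditional expectation (the noise cross-term vanishes because $x^{(\tau)}$ is $\Fcal_{\tau-1}$-measurable), and convexity of $f$ yields $\Exp{f(x^{(\tau)})} - f^\star \leq \Exp{\braket{\nabla f(x^{(\tau)})}{x^{(\tau)} - x^\star}}$; the noise term is controlled by Assumption~\ref{assumption:ell_hat}(2).

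Next I would sum over $\tau = 1, \dots, t$ to obtain the average-iterate guarantee. The telescoping Bregman terms are handled by Abel summation: since $1/\eta_k^{(\tau)}$ is increasing and each divergence is bounded by $D_k$ (Assumption~\ref{assumption:ell_hat}(3)), the sum collapses to $D_k/\eta_k^{(t)} = D_k t^{\alpha_k}/c_k$, while the noise terms sum to at most $\frac{c_k L}{2\ell_{\psi_k}}\sum_{\tau=1}^t \tau^{-\alpha_k} \leq \frac{c_k L}{2\ell_{\psi_k}(1-\alpha_k)} t^{1-\alpha_k}$ by comparison with an integral. Dividing by $t$ gives
\[
\frac{1}{t}\sum_{\tau=1}^t \parenth{\Exp{f(x^{(\tau)})} - f^\star} \leq \sum_{k=1}^K \parenth{\frac{1}{t^{1-\alpha_k}}\frac{D_k}{c_k} + \frac{c_k L}{2\ell_{\psi_k}(1-\alpha_k)}\frac{1}{t^{\alpha_k}}},
\]
which is exactly the bracketed sum appearing in the statement.

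The main obstacle is the final step, since the claim bounds the \emph{last} iterate $f(x^{(t)})$ rather than the running average. I would use a suffix-averaging argument in the style of Shamir--Zhang: introduce the suffix averages $S_j = \frac{1}{j}\sum_{\tau=t-j+1}^t f(x^{(\tau)})$, use the telescoping identity $f(x^{(t)}) = S_1 = S_t + \sum_{j=1}^{t-1}\frac{1}{j+1}\parenth{S_j - f(x^{(t-j)})}$, and bound each suffix increment $S_j - f(x^{(t-j)})$ by re-running the windowed one-step analysis over $[t-j+1, t]$ with comparator $x^{(t-j)}$. Collecting the resulting estimates against the harmonic weights $1/(j+1)$ is what produces the factor $1 + \sum_{\tau=1}^t 1/\tau$ multiplying the ergodic bound; this conversion is the delicate part, because the windowed bounds must be combined so that the time-varying step sizes do not destroy decay. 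The asymptotic rate then follows by inspection: each summand decays as $t^{-\min(\alpha_k, 1-\alpha_k)}$, so the bracket is $\Ocal(t^{-\bar\alpha})$ with $\bar\alpha = \min_k \min(\alpha_k, 1-\alpha_k)$, and since $\sum_{\tau=1}^t 1/\tau = \Theta(\log t)$ the whole bound is $\Ocal(t^{-\bar\alpha}\log t)$.
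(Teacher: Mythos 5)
First, a point of reference: the paper you are being compared against does not prove this proposition at all --- it is imported verbatim as Theorem~4 of~\cite{krichene2015SMD} --- so your reconstruction can only be judged against that reference and the standard arguments it adapts. Your architecture is the right one, and almost certainly the one used there: the one-step prox inequality, the Abel-summation treatment of the Bregman terms under increasing $1/\eta_k^{(\tau)}$, the integral comparison $\sum_{\tau \le t}\tau^{-\alpha_k} \le t^{1-\alpha_k}/(1-\alpha_k)$, and the resulting ergodic bound are all correct, and the factor $1 + \sum_{\tau=1}^t 1/\tau$ is indeed the fingerprint of a Shamir--Zhang suffix-averaging conversion to the last iterate. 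Your identity $S_j - S_{j+1} = \frac{1}{j+1}\parenth{S_j - f(x^{(t-j)})}$ is also correct.

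The gap is that the step you flag as ``the delicate part'' is precisely where the proof lives, and as literally written your windowed analysis would not close it. If you run the one-step analysis over $[t-j+1,t]$ with comparator $x^{(t-j)}$ and bound the telescoped Bregman terms the same way as in the ergodic stage, each window contributes a term of order $D_k/\eta_k^{(t)} = D_k t^{\alpha_k}/c_k$, which does not decay; collecting these against the harmonic weights then gives a bound of order $t^{\alpha_k}\log t$, which diverges. Two specific ingredients are needed. (i) The window must be \emph{anchored at the comparator}: start the telescope at $\tau = t-j$, so that the leading term is $D_{\psi_k}(x_k^{(t-j)}, x_k^{(t-j)}) = 0$ and Abel summation yields only the increment $D_k\parenth{1/\eta_k^{(t)} - 1/\eta_k^{(t-j)}} = \frac{D_k}{c_k}\parenth{t^{\alpha_k} - (t-j)^{\alpha_k}}$; equivalently, bound $S_{j+1} - f(x^{(t-j)})$ over $[t-j,t]$ and use $S_j - f(x^{(t-j)}) = \frac{j+1}{j}\parenth{S_{j+1} - f(x^{(t-j)})}$. (ii) You then need the chord-slope (concavity) estimates $\parenth{t^{\alpha_k} - (t-j)^{\alpha_k}}/(j+1) \le t^{\alpha_k - 1}$ and $\parenth{t^{1-\alpha_k} - (t-j-1)^{1-\alpha_k}}/(j+1) \le t^{-\alpha_k}$, valid because $u \mapsto u^{\alpha_k}$ and $u \mapsto u^{1-\alpha_k}$ are concave and vanish at $0$, to conclude that \emph{every} normalized windowed bound is dominated by the same bracket $\sum_k\parenth{\frac{D_k}{c_k t^{1-\alpha_k}} + \frac{c_k L}{2\ell_{\psi_k}(1-\alpha_k)t^{\alpha_k}}}$; only then does the harmonic collection produce exactly the stated factor $1 + \sum_{\tau=1}^t 1/\tau$. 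These estimates are routine once stated, so your outline is completable, but without them the conversion you assert does not follow. (A minor bookkeeping point you also gloss over: your per-population gradient carries the weights $(\theta_k)_i$, so relating $\Exp{\|\hat g_k^{(\tau)}\|_*^2}$ to the constant $L$ of Assumption~\ref{assumption:ell_hat} requires absorbing the $\theta_k$ factors, which the proposition's statement does implicitly.)
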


\subsection{Sensitivity analysis of the stochastic mirror descent update}
In this Section, we study the sensitivity of the stochastic process $\hat \ell^{(t)}(x^{(t)})$ to changes in the private parameter $\theta$.

First, we consider how the flow allocations change due to a change in mass on some origin-destination pairs. In this case, we hold the observed loss vector $\hat \ell^{(t)}$ fixed and will invoke Corollary~\ref{cor:repeat} afterward.

\begin{proposition}
\label{prop:SMD_update_sensitivity}
Fix a loss vector $\hat \ell^{(t)}$ and consider the stochastic mirror descent update for population $P_k$
\[
x_k^{(t+1)}(\theta_k) = \argmin_{x_k} \braket{ \hat \ell^{(t)}}{x_k }_{\theta_k} + \frac{1}{\eta_k^{(t)}} D_{\psi_k}(x_k, x_k^{(t)})
\]
where the minimization is taken across $\simplexProd$. Here, $x^k$ is 
viewed as a function of the mass vector $\theta_k$. Then for all $\theta_k, \theta_k' \in \Rbb_+^I$:
\[
\|x_k^{(t+1)}(\theta_k) - x_k^{(t+1)}(\theta_k')\| \leq \frac{\eta_k^{(t)}\|\hat\ell^{(t)}\|_*}{\ell_{\psi_k}} \|\theta_k - \theta_k'\|_\infty
\]
\end{proposition}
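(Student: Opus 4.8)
The plan is to treat this as a standard stability bound for the minimizer of a strongly convex objective that is perturbed only in its linear term. I would write $\Phi_\theta(x_k) = \braket{\hat\ell^{(t)}}{x_k}_{\theta} + \frac{1}{\eta_k^{(t)}} D_{\psi_k}(x_k, x_k^{(t)})$ for the objective defining the update, and abbreviate the two minimizers as $x^\star = x_k^{(t+1)}(\theta_k)$ and ${x'}^\star = x_k^{(t+1)}(\theta_k')$. The only dependence on the mass vector sits in the weighted inner product, which is linear in $x_k$ with (ordinary) gradient $g_{\theta}$ given coordinatewise by $(g_{\theta})_p = \theta_i\,\hat\ell^{(t)}_p$ for $p \in \Pcal_i$; the Bregman term contributes $\frac{1}{\eta_k^{(t)}}\bigl(\nabla\psi_k(x_k) - \nabla\psi_k(x_k^{(t)})\bigr)$ to the gradient. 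Since $D_{\psi_k}$ is $\ell_{\psi_k}$-strongly convex, each $\Phi_\theta$ is $\frac{\ell_{\psi_k}}{\eta_k^{(t)}}$-strongly convex and thus has a unique minimizer on the convex set $\simplexProd$.

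First I would write the first-order optimality (variational inequality) conditions for the two constrained minimizers, namely $\braket{\nabla\Phi_{\theta_k}(x^\star)}{{x'}^\star - x^\star} \geq 0$ and $\braket{\nabla\Phi_{\theta_k'}({x'}^\star)}{x^\star - {x'}^\star} \geq 0$, where $\braket{\cdot}{\cdot}$ denotes the ordinary (unweighted) inner product. Adding these and substituting the gradient decomposition above, the Bregman terms combine and rearranging gives
\[
\frac{1}{\eta_k^{(t)}}\braket{\nabla\psi_k(x^\star) - \nabla\psi_k({x'}^\star)}{x^\star - {x'}^\star} \leq \braket{g_{\theta_k} - g_{\theta_k'}}{{x'}^\star - x^\star}.
\]
The left-hand side I would lower-bound via the monotonicity consequence of strong convexity, $\braket{\nabla\psi_k(x^\star) - \nabla\psi_k({x'}^\star)}{x^\star - {x'}^\star} \geq \ell_{\psi_k}\|x^\star - {x'}^\star\|^2$, which follows by summing the two Bregman inequalities $D_{\psi_k}(x^\star,{x'}^\star) + D_{\psi_k}({x'}^\star,x^\star) \geq \ell_{\psi_k}\|x^\star - {x'}^\star\|^2$. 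The right-hand side I would upper-bound by H\"older's inequality in the reference/dual norm pair as $\|g_{\theta_k} - g_{\theta_k'}\|_* \, \|x^\star - {x'}^\star\|$.

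The step I expect to be the crux is bounding $\|g_{\theta_k} - g_{\theta_k'}\|_*$ by $\|\hat\ell^{(t)}\|_* \, \|\theta_k - \theta_k'\|_\infty$, since this is where the $\infty$-norm on the mass vector and the dual norm on the loss must be reconciled. Here I would invoke the block decomposition of Assumption~\ref{ass:norm_sum}: because $\|x_k\| = \sum_i \|(x_k)_{\Pcal_i}\|$, the dual norm factors as a maximum over blocks, $\|\ell\|_* = \max_i \|\ell_{\Pcal_i}\|_{*,i}$, where $\|\cdot\|_{*,i}$ is the dual of the $i$-th block norm. Since $(g_{\theta_k} - g_{\theta_k'})_{\Pcal_i} = \bigl((\theta_k)_i - (\theta_k')_i\bigr)\,\hat\ell^{(t)}_{\Pcal_i}$, this yields $\|g_{\theta_k} - g_{\theta_k'}\|_* = \max_i |(\theta_k)_i - (\theta_k')_i|\,\|\hat\ell^{(t)}_{\Pcal_i}\|_{*,i} \leq \|\theta_k - \theta_k'\|_\infty \, \|\hat\ell^{(t)}\|_*$. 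Combining the two bounds gives $\frac{\ell_{\psi_k}}{\eta_k^{(t)}}\|x^\star - {x'}^\star\|^2 \leq \|\hat\ell^{(t)}\|_* \, \|\theta_k - \theta_k'\|_\infty \, \|x^\star - {x'}^\star\|$, and dividing through by $\|x^\star - {x'}^\star\|$ (the claim being trivial when the two minimizers coincide) produces the stated sensitivity bound.
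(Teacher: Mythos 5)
Your proof is correct and follows essentially the same route as the paper's: first-order optimality at both minimizers, summing the two variational inequalities, and then playing the strong-convexity lower bound against the H\"older upper bound to isolate $\|x^\star - {x'}^\star\|$. The only (minor) difference is the final step: the paper bounds $\|g^{(t)}(\theta_k)-g^{(t)}(\theta_k')\|_*$ via a triangle inequality over blocks (a sum), whereas you invoke the max-over-blocks factorization of the dual norm implied by Assumption~\ref{ass:norm_sum}, which is, if anything, the cleaner and tighter justification of the same inequality.
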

\begin{proof}
The minimized function is differentiable on $\simplexProd$, and its gradient at $x_k$ is given by:
\[
((\theta_k)_i \hat \ell_{\Pcal_i}^{(t)})_{i \in I} + \frac{1}{\eta_k^{(t)}} \left[ \nabla \psi_k(x_k) - \nabla \psi_k(x_k^{(t)}) \right]
\]
To simplify the following expressions, we will use the following notation:
\begin{itemize}
\item $x_k^{(t+1)}(\theta_k)$ is denoted $x_k^{(t+1)}$, and $x^{(t+1)}_k(\theta_k')$ is denoted ${x'}^{(t+1)}$.
\item $g^{(t)}(\theta_k) = ((\theta_k)_i \hat \ell_{\Pcal_i}^{(t)})_{i \in I}$ 
\item $h_k^{(t)}(x_k) = \frac{1}{\eta_k^{(t)}} \left[ \nabla \psi_k(x_k) - \nabla \psi_k(x_k^{(t)}) \right]$
\end{itemize}
Then, by first-order optimality, we must have for all $x_k \in \simplexProd$:
\[
\braket{g^{(t)}(\theta_k) + h^{(t)}_k(x_k^{(t+1)})}{x_k - x_k^{(t+1)}} \geq 0
\]
In particular, for $x_k = {x'}^{(t+1)}$, we have:
\[
\braket{g^{(t)}(\theta_k) + h^{(t)}_k(x_k^{(t+1)})}{{x_k'}^{(t+1)} - x_k^{(t+1)}} \geq 0
\]
Permuting the roles of $\theta_k$ and $\theta_k'$, and summing the resulting inequalities, we have:
{\small\begin{multline}
\label{eq:sensitivity_proof_ineq}
\braket{g^{(t)}(\theta_k) - g^{(t)}(\theta_k')}{{x_k'}^{(t+1)} - x_k^{(t+1)}} \geq 
\\
\braket{h^{(t)}_k({x_k'}^{(t+1)}) - h^{(t)}_k(x_k^{(t+1)})}{{x_k'}^{(t+1)} - x_k^{(t+1)}}
\end{multline}}
Furthermore, we have by Cauchy-Schwartz:
\begin{align*}
&\braket{g^{(t)}(\theta_k) - g^{(t)}(\theta_k')}{{x_k'}^{(t+1)} - x_k^{(t+1)}} \\
&\leq \|g^{(t)}(\theta_k) - g^{(t)}(\theta_k')\|_* \|{x_k'}^{(t+1)} - x_k^{(t+1)}\|
\end{align*}
By strong convexity of $\psi_k$, we have:
{\small\begin{align*}
&\braket{h^{(t)}_k({x_k'}^{(t+1)}) - h^{(t)}_k(x_k^{(t+1)})}{{x_k'}^{(t+1)} - x_k^{(t+1)}} \\
&= \frac{1}{\eta_k^{(t)}}\braket{\nabla \psi_k({x_k'}^{(t+1)}) - \nabla \psi_k(x_k^{(t+1)})}{{x_k'}^{(t+1)} - x_k^{(t+1)}} \\
& \geq \frac{\ell_{\psi_k}}{\eta_k^{(t)}} \|{x_k'}^{(t+1)} - x_k^{(t+1)}\|^2
\end{align*}}
Combining these inequalities with~\eqref{eq:sensitivity_proof_ineq}, we have:
\begin{multline*}
\|g^{(t)}(\theta_k) - g^{(t)}(\theta_k')\|_* \|{x_k'}^{(t+1)} - x_k^{(t+1)}\| \\ \geq
\frac{ \ell_{\psi_k}}{\eta_k^{(t)}} \|{x_k'}^{(t+1)} - x_k^{(t+1)}\|^2
\end{multline*}
After simplification, this yields:
\[
\|{x_k'}^{(t+1)} - x_k^{(t+1)}\| \leq \frac{\eta_k^{(t)}}{\ell_{\psi_k}}\|g^{(t)}(\theta_k) - g^{(t)}(\theta_k')\|_*
\]
Finally, using the expression of $g^{(t)}(\theta_k) = ((\theta_k)_i \hat \ell_{\Pcal_i}^{(t)})_{i \in I}$, we have:
\begin{align*}
\|g^{(t)}(\theta_k) - g^{(t)}(\theta_k')\|_* 
&\leq \sum_{i \in I} \|\hat \ell^{(t)}_{\Pcal_i}\|_* |(\theta_k)_i - (\theta_k')_i| \\
&\leq \|\hat \ell^{(t)}\|_* \|\theta_k - \theta_k'\|_\infty
\end{align*}
which concludes the proof.
\end{proof}

We have bounded how much a change in the private parameter affects the distribution on paths. Now, we analyze how the flows are affected by changes in distribution.

We will use the notation $\phi(x ; \theta)$, which makes the dependence of edge flows on the parameter $\theta$ explicit. Also, $x^{(t+1)}(\theta)$ will be shorthand for $(x_1^{(t+1)}(\theta_1),\dots,x_K^{(t+1)}(\theta_K))$. Also, let $\|\cdot\|_a$ denote an arbitrary norm on the space of edge flows.

\begin{lemma}
\label{lem:flow_bnd}
For any $\adj(\theta,\theta')$, we have:
\begin{align*}
\| \phi(x^{(t+1)}(\theta); \theta) - &\phi(x^{(t+1)}(\theta'); \theta') \|_a \\
& \leq c A_x \left[A_\Delta +  A_\theta \frac{\eta_k^{(t)}\|\hat\ell^{(t)}\|_*}{\ell_{\psi_k}} \right]
\end{align*}
Here, $A_\theta$ is as given in Assumption~\ref{ass:bound_theta} and:
\[
A_x = \sup_{\|x_k\| \leq 1} \left\| \sum_{i = 1}^I M_i (x_{k})_{\Pcal_i} \right\|_a
\]
\[
A_\Delta = \sup_{x_k \in \simplexProd} \|x_k\|
\]
\end{lemma}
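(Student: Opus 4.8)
The plan is to reduce the difference of flows to the contribution of the single population $k^*$ singled out by the adjacency relation, and then to split that contribution into a ``change-in-mass'' term and a ``change-in-distribution'' term, each controlled by one of the two quantities inside the bracket.

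First I would invoke Definition~\ref{def:adj}: since $\theta$ and $\theta'$ are adjacent, there is a $k^*$ with $\theta_k = \theta_k'$ for all $k \neq k^*$ and $\|\theta_{k^*} - \theta_{k^*}'\|_\infty \leq c$. Because the update in Algorithm~\ref{alg:smd_dynamics} for population $k$ depends only on its own mass $\theta_k$ (with $\hat\ell^{(t)}$ and the previous iterate $x_k^{(t)}$ held fixed, as in Proposition~\ref{prop:SMD_update_sensitivity}), the updated iterates satisfy $x_k^{(t+1)}(\theta_k) = x_k^{(t+1)}(\theta_k')$ for every $k \neq k^*$. Writing $\phi(x^{(t+1)}(\theta);\theta) = \sum_k \sum_i (\theta_k)_i M_i (x_k^{(t+1)}(\theta_k))_{\Pcal_i}$, all summands with $k \neq k^*$ cancel in the difference, leaving only the $k^*$ term. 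Abbreviating $x = x_{k^*}^{(t+1)}(\theta_{k^*})$, $x' = x_{k^*}^{(t+1)}(\theta_{k^*}')$, $a = \theta_{k^*}$, and $a' = \theta_{k^*}'$, the flow difference reduces to $\sum_i a_i M_i x_{\Pcal_i} - \sum_i a_i' M_i x'_{\Pcal_i}$.

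Next I would add and subtract the cross term $\sum_i a_i' M_i x_{\Pcal_i}$ to rewrite this as $\sum_i (a_i - a_i') M_i x_{\Pcal_i} + \sum_i a_i' M_i (x_{\Pcal_i} - x'_{\Pcal_i})$, then apply the triangle inequality in $\|\cdot\|_a$. For each of the two terms I would view the inner sum as $\sum_i M_i w_{\Pcal_i}$ for an appropriate $w$ and use the defining (operator-norm) inequality $\|\sum_i M_i w_{\Pcal_i}\|_a \leq A_x \|w\|$, together with the additive decomposition $\|w\| = \sum_i \|w_{\Pcal_i}\|$ from Assumption~\ref{ass:norm_sum}. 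For the first term, taking $w_{\Pcal_i} = (a_i - a_i') x_{\Pcal_i}$ gives $A_x \sum_i |a_i - a_i'|\,\|x_{\Pcal_i}\| \leq c A_x \sum_i \|x_{\Pcal_i}\| = c A_x \|x\| \leq c A_x A_\Delta$, using $|a_i - a_i'| \leq c$ and $\|x\| \leq A_\Delta$ since $x \in \simplexProd$. For the second term, taking $w_{\Pcal_i} = a_i'(x_{\Pcal_i} - x'_{\Pcal_i})$ and bounding $|a_i'| \leq A_\theta$ by Assumption~\ref{ass:bound_theta} gives $A_x A_\theta \|x - x'\|$; then Proposition~\ref{prop:SMD_update_sensitivity} with $\|a - a'\|_\infty \leq c$ yields $\|x - x'\| \leq c\,\eta_{k^*}^{(t)}\|\hat\ell^{(t)}\|_* / \ell_{\psi_{k^*}}$. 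Summing the two bounds and factoring out $c A_x$ recovers exactly the claimed inequality, with the $\eta_k^{(t)}$ and $\ell_{\psi_k}$ in the conclusion understood as those of the distinguished population $k^*$.

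I expect the only genuine subtlety, rather than a real obstacle, to be the bookkeeping around Assumption~\ref{ass:norm_sum}: the additive decomposition of the reference norm is precisely what converts the operator bound $A_x$ and the per-coordinate mass bounds ($c$ and $A_\theta$) into the clean products above, and it must be applied consistently to both auxiliary vectors $w$. I would also take care to flag that holding $\hat\ell^{(t)}$ and the previous iterates fixed is what legitimizes reusing the per-step sensitivity estimate of Proposition~\ref{prop:SMD_update_sensitivity}, and that the cancellation of the $k \neq k^*$ terms relies on the per-population structure of the update rather than on any joint optimization.
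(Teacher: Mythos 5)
Your proposal is correct and follows essentially the same route as the paper: isolate the single population $k^*$ (the others cancel since the update with $\hat\ell^{(t)}$ fixed depends only on each population's own mass), split the flow difference into a change-in-mass term bounded by $cA_xA_\Delta$ and a change-in-distribution term bounded via Proposition~\ref{prop:SMD_update_sensitivity} by $cA_\theta A_x \eta_{k^*}^{(t)}\|\hat\ell^{(t)}\|_*/\ell_{\psi_{k^*}}$. Your add-and-subtract of the cross term is exactly the paper's triangle inequality through the intermediate point $\phi(x^{(t+1)}(\theta);\theta')$, and your explicit use of Assumption~\ref{ass:norm_sum} (and the reading of $\eta_k^{(t)}$, $\ell_{\psi_k}$ as those of $k^*$) only makes explicit what the paper leaves implicit.
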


\begin{proof} Consider any $\adj(\theta,\theta')$. Note that $x_k^{(t+1)}(\theta) = x_k^{(t+1)}(\theta')$ for any $k \neq k^*$, since with the loss vector given, the update for population $k$ only depends on $\theta_k$.
\begin{align*}
\| \phi&(x^{(t+1)}(\theta); \theta) - \phi(x^{(t+1)}(\theta'); \theta') \|_a \\
& \leq \| \phi(x^{(t+1)}(\theta); \theta) - \phi(x^{(t+1)}(\theta); \theta') \|_a + \dots \\
& \qquad \| \phi(x^{(t+1)}(\theta); \theta') - \phi(x^{(t+1)}(\theta'); \theta') \|_a
\end{align*}

For the first term, since $\theta$ and $\theta'$ are adjacent:
{\small\begin{align*}
&\| \phi(x^{(t+1)}(\theta); \theta) - \phi(x^{(t+1)}(\theta); \theta') \|_a \\
& = \left\| \sum_{i = 1}^I (\theta_{k^*})_i M_i (x_{k^*}^{(t+1)}(\theta))_{\Pcal_i} - 
\sum_{i = 1}^I (\theta_{k^*}')_i M_i (x_{k^*}^{(t+1)}(\theta))_{\Pcal_i} \right\|_a \\
& \leq \left\| \sum_{i = 1}^I |(\theta_{k^*} - \theta_{k^*}')_i| M_i (x_{k^*}^{(t+1)}(\theta))_{\Pcal_i} \right\|_a \\
& \leq c\left\| \sum_{i = 1}^I  M_i (x_{k^*}^{(t+1)}(\theta))_{\Pcal_i} \right\|_a \leq c A_x A_\Delta
\end{align*}}%
For the second term, we invoke Proposition~\ref{prop:SMD_update_sensitivity}:
{\small\begin{align*}
&\| \phi(x^{(t+1)}(\theta); \theta') - \phi(x^{(t+1)}(\theta'); \theta') \|_a \\
& = \left\| \sum_{i = 1}^I (\theta_{k^*}')_i M_i (x_{k^*}^{(t+1)}(\theta))_{\Pcal_i} - \sum_{i = 1}^I (\theta_{k^*}')_i M_i (x_{k^*}^{(t+1)}(\theta'))_{\Pcal_i} \right\|_a \\
& = \left\| \sum_{i = 1}^I (\theta_{k^*}')_i M_i \left[ (x_{k^*}^{(t+1)}(\theta))_{\Pcal_i}- (x_{k^*}^{(t+1)}(\theta'))_{\Pcal_i} \right] \right\|_a \\
& \leq A_\theta \left\| \sum_{i = 1}^I M_i \left[ (x_{k^*}^{(t+1)}(\theta))_{\Pcal_i}- (x_{k^*}^{(t+1)}(\theta'))_{\Pcal_i} \right] \right\|_a \\
& \leq A_\theta A_x \left\|x_{k^*}^{(t+1)}(\theta)-x_{k^*}^{(t+1)}(\theta')\right\|
 \leq c A_\theta A_x \frac{\eta_{k^*}^{(t)}\|\hat\ell^{(t)}\|_*}{\ell_{\psi_{k^*}}}
\end{align*}}%
As desired.
\end{proof}

We have bounded the effect of a change in the private parameter on the flows. 
Thus, we can state the sensitivity of the loss vector at time $t+1$ due to a small differential in the private parameter $\theta$, when the observed loss vector at time $t$ is held fixed.

\begin{theorem}
\label{theorem:sens_loss}
	\emph{Sensitivity of the loss function: } 
	For any $\adj(\theta,\theta')$:
	\begin{align*}
	\| \ell(x^{(t+1)}(\theta); \theta, x^{(t)}, \hat \ell^{(t)}) - \ell(x^{(t+1)}(\theta'); \theta', x^{(t)}, \hat \ell^{(t)}) \| \\
	\leq c A_\ell A_x \left[A_\Delta +  A_\theta\frac{\max_{k \in [K]}(\eta_{k}^{(t)}) \|\hat\ell^{(t)}\|_*}{\min_{k \in [K]} (\ell_{\psi_{k}})}\right]
	\end{align*}
	Here, $A_x, A_\Delta,$ and $A_\theta$ are as defined in Assumption~\ref{ass:bound_theta} and Lemma~\ref{lem:flow_bnd}, and $A_\ell$ denotes the Lipschitz constant of the function $\ell : \phi \mapsto \ell(\phi)$ with respect to the norm $\|\cdot\|_a$ on the domain and $\|\cdot\|$ on the codomain.
\end{theorem}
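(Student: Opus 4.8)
The plan is to observe that the path-loss vector is, by construction, obtained from the edge flows $\phi$ by applying the edge cost functions and summing along each path, so that $\ell(x^{(t+1)}(\theta);\theta) = \ell(\phi(x^{(t+1)}(\theta);\theta))$, where on the right $\ell$ denotes the map $\phi \mapsto \ell(\phi)$ whose Lipschitz constant is $A_\ell$ by hypothesis. The entire theorem is then a two-step composition: reduce the difference of loss vectors to a difference of edge-flow vectors via Lipschitz continuity, and then invoke Lemma~\ref{lem:flow_bnd} to bound the latter.

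Concretely, for any $\adj(\theta,\theta')$ I would first write
\[
\| \ell(\phi(x^{(t+1)}(\theta);\theta)) - \ell(\phi(x^{(t+1)}(\theta');\theta')) \|
\leq A_\ell \, \| \phi(x^{(t+1)}(\theta);\theta) - \phi(x^{(t+1)}(\theta');\theta') \|_a,
\]
using the definition of $A_\ell$ as the Lipschitz constant of $\ell$ from $(\|\cdot\|_a)$ on edge flows to $(\|\cdot\|)$ on path losses. This step is immediate once the composition structure is made explicit, and it is the place where $A_\ell$ enters the bound.

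Next I would apply Lemma~\ref{lem:flow_bnd} to the edge-flow difference, which yields a factor $c A_x \left[ A_\Delta + A_\theta \frac{\eta_{k^*}^{(t)} \|\hat\ell^{(t)}\|_*}{\ell_{\psi_{k^*}}} \right]$, where $k^*$ is the unique population at which $\theta$ and $\theta'$ differ under Definition~\ref{def:adj}. Since the adjacency relation does not specify which population is perturbed, I would then bound this uniformly in $k^*$ by replacing $\eta_{k^*}^{(t)}$ with $\max_{k \in [K]} \eta_k^{(t)}$ and $\ell_{\psi_{k^*}}$ with $\min_{k \in [K]} \ell_{\psi_k}$, which only enlarges the right-hand side. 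Multiplying through by $A_\ell$ gives exactly the claimed bound.

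The argument is essentially mechanical, so the only genuine subtlety, and the step I would treat most carefully, is this passage from the population-specific constants $\eta_{k^*}^{(t)}, \ell_{\psi_{k^*}}$ appearing in Lemma~\ref{lem:flow_bnd} to the worst-case quantities $\max_k \eta_k^{(t)}$ and $\min_k \ell_{\psi_k}$: one must verify that the replacement is in the correct (upper-bounding) direction for both the numerator and the denominator, and that no dependence on the unknown index $k^*$ survives, so that the resulting sensitivity bound is valid uniformly over all adjacent pairs and can later be fed into Corollary~\ref{cor:repeat}.
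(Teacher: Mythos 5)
Your proposal is correct and is precisely the argument the paper intends: the theorem is stated without an explicit proof because it follows immediately from composing the Lipschitz bound $\| \ell(\phi) - \ell(\phi') \| \leq A_\ell \|\phi - \phi'\|_a$ with Lemma~\ref{lem:flow_bnd}, then replacing $\eta_{k^*}^{(t)}$ and $\ell_{\psi_{k^*}}$ by $\max_{k} \eta_k^{(t)}$ and $\min_{k} \ell_{\psi_k}$ to remove the dependence on the unknown perturbed population. Your attention to the direction of the max/min replacement is exactly the right check, and nothing further is needed.
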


Note that the sensitivity of $\ell^{(t+1)}$ depends on $t$ through the learning rate $\eta_k^{(t)}$.


\section{Differential privacy of the routing game}
\label{sec:diff_priv_route}


In this Section, we use results from the previous sections to give privacy guarantees on the routing game when the loss vectors are observed with Gaussian noise.

Also, recall that the Gaussian mechanism preserves $(\eps,\delta)$ differential privacy, and the privacy values depend on the variance of the mechanism and the sensitivity of the function. 
At each iteration $t$, we suppose that the populations observe $\hat \ell^{(t)} = \ell(x^{(t)}) + Z_t$ where $(Z_t)_p \simiid \gauss(\sigma^2)$.

We offer a couple of different interpretations of this mechanism. The first is that the data collector adds Gaussian noise before releasing this data to the populations. For example, the Department of Transportation might choose to add noise before transmitting the measurements from inductive-loop detectors in the road for privacy purposes. The second interpretation is that each driver experiences a perturbed version of the nominal loss when driving along the road, and when a population aggregates these perturbations, they obey a central limit theorem and look roughly normal in distribution.

First, we observe that for each path $p$, since the loss function $\ell_p$ is continuous on the compact set $\parenth{\simplexProd}^K$, it is bounded. Therefore, there exists $M > 0$ such that for all $x \in \parenth{\simplexProd}^K$, $\|\ell(x)\|_\infty \leq M$. 

\begin{theorem}
\label{theorem:eps_delta_bnd}
After $T$ iterations, the mapping $\theta \mapsto (\hat \ell^{(1)}, \dots, \hat \ell^{(T)})$ is $(\eps,\delta)$ differentially private, where:
\[
\eps = \sum_{t = 1}^T \eps_t
\qquad
\delta = \sum_{t = 1}^T \exp\left[ \sum_{t' = t+1}^T \eps_{t'} \right]\delta_t + \delta'
\]
Here, $a$ is any positive constant and $\delta', \eps_t, \delta_t$ are any constants that satisfy the following constraints:
\[
1 - \delta' = (1 - 2\exp(-a^2/2\sigma^2))^{T\sum_{i = 1}^I |\Pcal_i|}
\]
\begin{align*}
\eps_t & > \frac{ c A_\ell A_x (2\ln(1.25/\delta_t))^{1/2} }{\sigma^2} \times \\
& \left[A_\Delta +  A_\theta \frac{\max_{k \in [K]}(\eta_{k}^{(t)}) ({\sum_{i = 1}^I |\Pcal_i|})^{1/2}(M+a)}{\min_{k \in [K]} (\ell_{\psi_{k}})} \right]
\end{align*}
$A_x, A_\Delta,A_\theta,$ and $A_\ell$ are as defined in Assumption~\ref{ass:bound_theta}, Lemma~\ref{lem:flow_bnd}, and Theorem~\ref{theorem:sens_loss}.
\end{theorem}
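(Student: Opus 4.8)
The plan is to realize the trajectory map $\theta \mapsto (\hat\ell^{(1)},\dots,\hat\ell^{(T)})$ as a repeated adaptive composition of per-iteration Gaussian mechanisms and to invoke Corollary~\ref{cor:repeat} to aggregate the per-step budgets into $\eps = \sum_t \eps_t$ and $\sum_t \exp[\sum_{t'=t+1}^{T}\eps_{t'}]\delta_t$. The one genuinely new ingredient is that the per-step sensitivity supplied by Theorem~\ref{theorem:sens_loss} is \emph{random}: it scales with $\|\hat\ell^{(t)}\|_*$, which is unbounded under Gaussian noise. I would therefore first carve out a high-probability event $E$ on which every noise coordinate is small, obtain a deterministic sensitivity bound on $E$, run the composition conditionally on $E$, and finally strip the conditioning with Lemma~\ref{lemma:tail_event}, which is exactly what contributes the extra additive term $\delta'$.

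Concretely, let $E$ be the event that all $T\sum_{i=1}^{I}|\Pcal_i|$ i.i.d.\ perturbations $(Z_t)_p$ lie in $[-a,a]$. By independence and the standard tail bound $P(|Z|>a)\le 2\exp(-a^2/2\sigma^2)$ for $Z\sim\gauss(\sigma^2)$, one gets $P(E)\ge(1-2\exp(-a^2/2\sigma^2))^{T\sum_{i=1}^{I}|\Pcal_i|}$, and setting $1-\delta'$ equal to this lower bound reproduces the stated constraint. On $E$, the triangle inequality together with the uniform bound $\|\ell(x)\|_\infty\le M$ gives $\|\hat\ell^{(t)}\|_\infty\le M+a$, and converting the $\ell_\infty$ bound to the Euclidean dual norm via $\|v\|_2\le(\dim)^{1/2}\|v\|_\infty$ yields $\|\hat\ell^{(t)}\|_*\le(\sum_{i=1}^{I}|\Pcal_i|)^{1/2}(M+a)$. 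Substituting this into Theorem~\ref{theorem:sens_loss} replaces the random factor $\|\hat\ell^{(t)}\|_*$ by a deterministic quantity and produces, for each $t$, an $\ell_2$-sensitivity bound $\Delta_t$ for the map $\theta\mapsto\ell^{(t+1)}$ that holds on $E$.

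Next, for each iteration I would apply the Gaussian mechanism (Proposition~\ref{prop:gauss_mech}) to $\theta\mapsto\hat\ell^{(t)}=\ell(x^{(t)};\theta)+Z_t$ with $\ell_2$-sensitivity $\Delta_t$: taking $b=(2\ln(1.25/\delta_t))^{1/2}$ and requiring $\sigma\ge b\Delta_t/\eps_t$, then solving for $\eps_t$ and substituting the explicit form of $\Delta_t$, reproduces the displayed lower bound on $\eps_t$. Feeding these $(\eps_t,\delta_t)$ guarantees into Corollary~\ref{cor:repeat} yields the composed budget $(\sum_t\eps_t,\sum_t\exp[\sum_{t'=t+1}^{T}\eps_{t'}]\delta_t)$ for the whole trajectory, valid conditionally on $E$. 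Applying Lemma~\ref{lemma:tail_event} with this event then upgrades the conditional guarantee to the unconditional $(\eps,\delta)$ bound with $\delta$ augmented by $\delta'$, which is precisely the claimed statement.

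The main obstacle is making the conditioning rigorous inside the adaptive composition, and two points need care. First, the sensitivity at step $t+1$ depends on the \emph{previous} realized loss $\hat\ell^{(t)}$, so the per-step statement must be phrased in the intersection-with-$E$ form demanded by Lemma~\ref{lemma:tail_event}; the event $E$ must simultaneously control the noise that determines every sensitivity while leaving the current step's Gaussian noise free to act as the privacy mechanism. Second, Theorem~\ref{theorem:sens_loss} bounds the one-step sensitivity while holding the previous iterate $x^{(t)}$ fixed, whereas in the composition the iterate reconstructed from the observed history still depends on $\theta$ through the explicitly $\theta_k$-weighted mirror-descent update. I would need to set up Corollary~\ref{cor:repeat} so that this previous iterate is treated as part of the conditioned history — noting that only population $k^*$'s coordinate ever differs between adjacent $\theta,\theta'$ — so that the one-step bound is indeed the correct conditional sensitivity. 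Verifying that these conditional sensitivities are uniform over the history, and hence that each Gaussian-mechanism budget $(\eps_t,\delta_t)$ is valid for \emph{every} fixed past, is where the real work lies.
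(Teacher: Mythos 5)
Your proposal follows essentially the same route as the paper's own proof: the same high-probability event $E$ bounding every noise coordinate by $a$ via the Gaussian tail bound, the same on-$E$ conversion $\|\hat\ell^{(t)}\|_2 \leq (\sum_{i=1}^I |\Pcal_i|)^{1/2}(M+a)$ fed into Theorem~\ref{theorem:sens_loss} to get a deterministic per-step sensitivity, and the same final assembly via Proposition~\ref{prop:gauss_mech}, Corollary~\ref{cor:repeat}, and Lemma~\ref{lemma:tail_event}. The subtleties you flag about conditioning inside the adaptive composition are real but are ones the paper's terse proof glosses over rather than resolves, so your argument is, if anything, a more careful rendering of the identical strategy.
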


\begin{proof}
We can invoke the Chernoff bound and see that $P(|(Z_t)_p| > a) \leq 2\exp(-a^2/2\sigma^2)$. It follows that the event $E = \{\|Z_t\|_\infty \leq a \text{ for all } t\}$ holds with at least probability $(1 - 2\exp(-a^2/2\sigma^2))^{T\sum_{i = 1}^I |\Pcal_i|}$. On $E$, we have that $\|\hat \ell^{(t)}\|_2 \leq ({\sum_{i = 1}^I |\Pcal_i|})^{1/2} \|\hat \ell^{(t)}\|_\infty \leq ({\sum_{i = 1}^I |\Pcal_i|})^{1/2}(M+a)$ a.s.

Invoking Theorem~\ref{theorem:sens_loss}, we can see that, on $E$:
\begin{align*}
\Delta_2& \ell^{(t+1)} \leq \\
&
c A_\ell A_x \left[A_\Delta +  A_\theta \frac{\max_{k \in [K]}(\eta_{k}^{(t)}) ({\sum_{i = 1}^I |\Pcal_i|})^{1/2}(M+a)}{\min_{k \in [K]} (\ell_{\psi_{k}})} \right]
\end{align*}

Thus, invoking Proposition~\ref{prop:gauss_mech}, Corollary~\ref{cor:repeat}, and Lemma~\ref{lemma:tail_event} yields our desired result.
\end{proof}

Note that $a$ can be chosen to be any positive constant, and, in effect, provides a trade-off between the $\eps$ and the $\delta$ parameters.

\section{Numerical Example}
\label{sec:numerical}


Consider the routing game played on the network in Figure~\ref{fig:network}, with the following populations:
\begin{enumerate}
\item Population $P_1$ has mass vector $\theta_1 = (1, 0)$, and follows stochastic mirror descent dynamics with learning rates $\Ocal(t^{-.5})$.
\item Population $P_2$ has mass vector $\theta_2 = (.2, 1.2)$, and follows stochastic mirror descent dynamics with learning rates $\Ocal(t^{-.2})$.
\end{enumerate}

\def\scale{.47}
\begin{figure}[h]
\centering
\includegraphics[width=.3\textwidth]{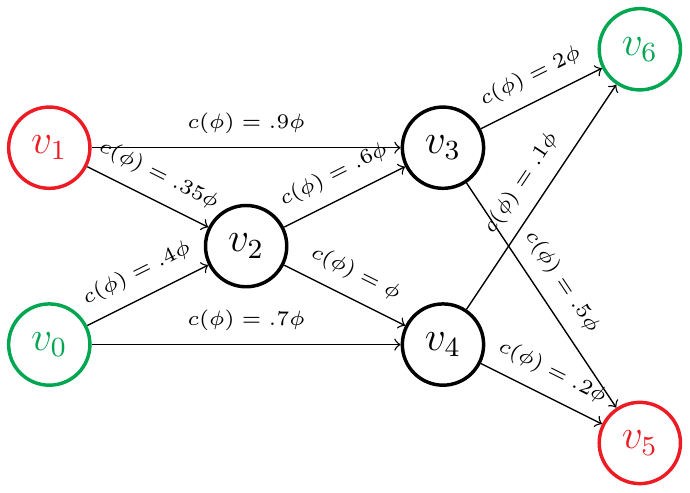}
\caption{Example network with two origin-destination pairs: $(v_0, v_6)$ and $(v_1, v_5)$.}
\label{fig:network}
\end{figure}


The losses are taken to be linear. The resulting path loss functions are bounded by $M = 2$. We simulate the game for $T = 200$ iterations, with Gaussian noise with standard deviation $\sigma \in \{.01, .1, .4\}$. 


\begin{figure}[h!]
\centering
\includegraphics[page=1,width=.4\textwidth]{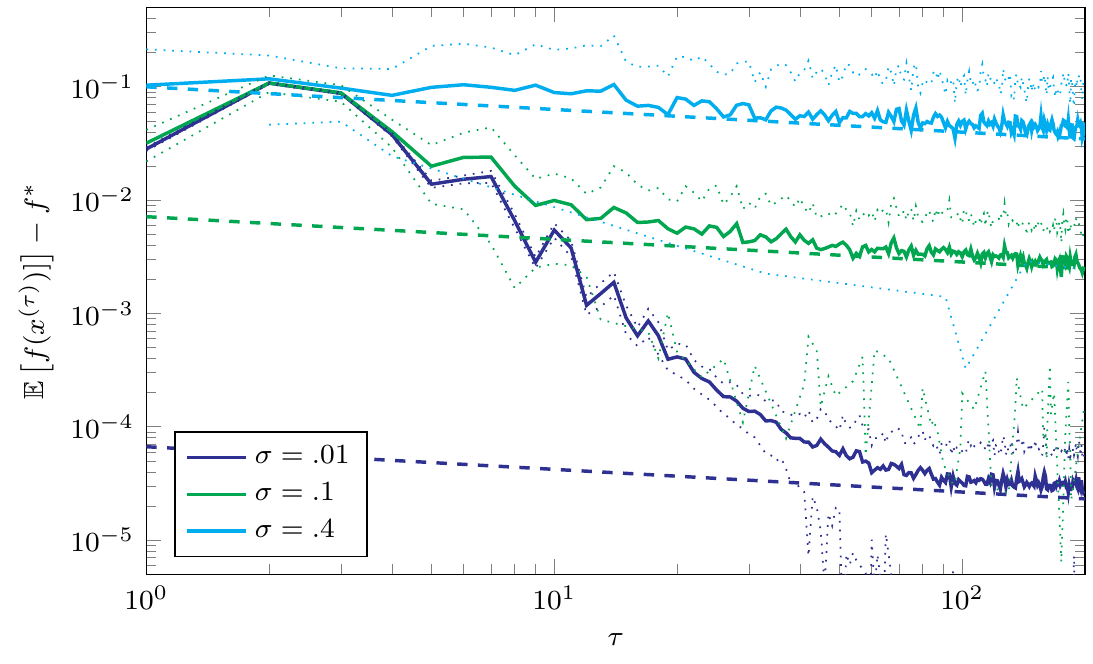}
\caption{Potential function values $f(x^{(\tau)})$ as a function of the iteration $\tau$, for different values of $\sigma$. The solid and dotted lines show, respectively, the average and the standard deviation over 150 runs of the simulation. The dashed lines show the $\tilde \Ocal(t^{-.2})$ asymptotic rate predicted by Proposition~\ref{prop:SMD_subopt}.}
\label{fig:potentials}
\end{figure}

\def\scale{.36}
\begin{figure}[h!]
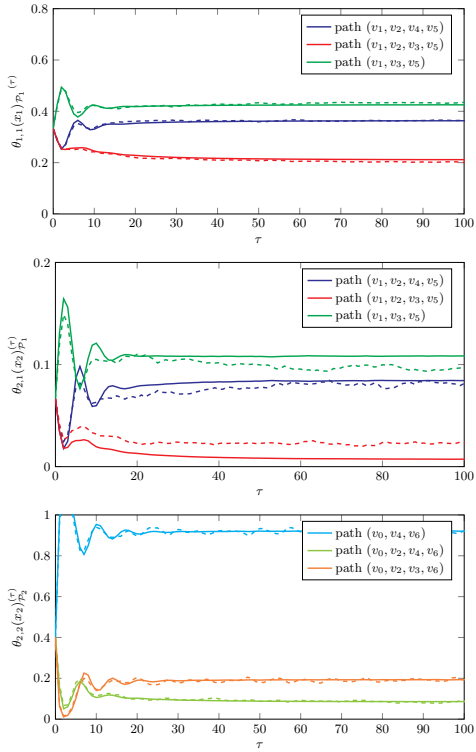

\centering
\includegraphics[page=2,width=\scale\textwidth]{TikZ/cdc-routing-privacy-sim}
\includegraphics[page=3,width=\scale\textwidth]{TikZ/cdc-routing-privacy-sim}
\includegraphics[page=4,width=\scale\textwidth]{TikZ/cdc-routing-privacy-sim}
\caption{Path flows for each population, averaged over 150 runs, for $\sigma = .01$ (solid lines) and $\sigma = .4$ (dashed lines)}
\label{fig:flows}
\end{figure}

Figure~\ref{fig:potentials} shows the values of the potential function for the different values of $\sigma$. The asymptotic rate is consistent with $\tilde \Ocal(t^{-\min(\alpha_1, \alpha_2)}) = \tilde \Ocal(t^{-.2})$ rate predicted by Proposition~\ref{prop:SMD_subopt}. The variance of the noise $\sigma^2$ significantly affects the value of the expected potential. 
The effect of $\sigma$ can also be observed in Figure~\ref{fig:flows}, which shows the path flows for both populations, for $\sigma \in \{.01, .4\}$. Besides the effect of the noise level, we also observe that because the learning rates of population $P_2$ have a slower decay rate, its updates are more aggressive, which is reflected in the trajectories of its path flows.

Additionally, we consider the differential privacy of these observable traffic flows. Applying Theorem~\ref{theorem:eps_delta_bnd}, we plot the differential privacy values as a function of the number of iterations in Figure~\ref{fig:eps_delta}. Generally, we are able to mask a small amount of population flow, but should $c$ grow too large, the bounds quickly become trivial, i.e. $\delta = 1$. Furthermore, this value at which we can no longer meaningfully guarantee privacy can be thought of as the rate at which populations must shift origin-destination pairs to retain some level of privacy guarantee.
\begin{figure}[h!]
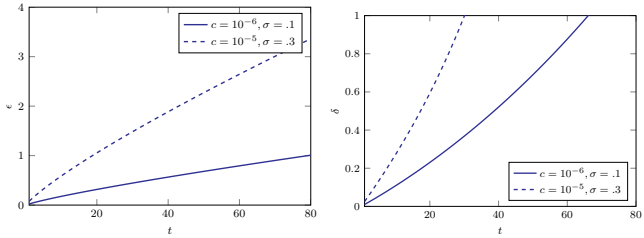

\centering
\includegraphics[page=7,width=.5\columnwidth]{TikZ/cdc-routing-privacy-sim}%
\includegraphics[page=8,width=.5\columnwidth]{TikZ/cdc-routing-privacy-sim}
\caption{A plot of the values of $\epsilon, \delta$ for which differential privacy holds, as a function of $t$, the number of iterations. Here, $(c, \sigma)$ are taken to be $(10^{-6}, .1)$ then $(10^{-5}, .3)$, and $a$ is taken to be $2$. For larger values of $c$, the privacy guarantees are only meaningful for shorter periods of time.}
\label{fig:eps_delta}
\end{figure}

\vspace{-.1in}
\section{Conclusion}
\label{sec:conclusion}


In this paper, we considered the privacy of the origins and destinations of drivers when the nominal traffic losses are observable with Gaussian noise. Considering a general online learning model based on stochastic mirror descent, and noting that the routing game is a potential game, we can think of the dynamics of drivers as optimizing the Rosenthal potential.

We analyzed the sensitivity of each update step as a function of the masses for each origin-destination pair, which allowed us to bound the influence of this private information on the observable traffic losses. Additionally, we provided bounds on the convergence rates for different levels of noise, which provides insight into the relationship between how long it takes traffic flows to settle at equilibrium and how much is revealed by these observable traffic costs.


\bibliographystyle{IEEEtran}
\bibliography{root}

\begin{thebibliography}{10}
\providecommand{\url}[1]{#1}
\csname url@samestyle\endcsname
\providecommand{\newblock}{\relax}
\providecommand{\bibinfo}[2]{#2}
\providecommand{\BIBentrySTDinterwordspacing}{\spaceskip=0pt\relax}
\providecommand{\BIBentryALTinterwordstretchfactor}{4}
\providecommand{\BIBentryALTinterwordspacing}{\spaceskip=\fontdimen2\font plus
\BIBentryALTinterwordstretchfactor\fontdimen3\font minus
  \fontdimen4\font\relax}
\providecommand{\BIBforeignlanguage}[2]{{%
\expandafter\ifx\csname l@#1\endcsname\relax
\typeout{** WARNING: IEEEtran.bst: No hyphenation pattern has been}%
\typeout{** loaded for the language `#1'. Using the pattern for}%
\typeout{** the default language instead.}%
\else
\language=\csname l@#1\endcsname
\fi
#2}}
\providecommand{\BIBdecl}{\relax}
\BIBdecl

\bibitem{Board2011}
{Transportation Research Board}, ``{Transportation Research Board} 2011 annual
  report,'' The National Academies, Tech. Rep., 2011.

\bibitem{Solove2002}
D.~J. Solove, ``Conceptualizing privacy,'' \emph{California Law Review},
  vol.~90, p. 1087, 2002.

\bibitem{Tufekci2014}
Z.~Tufekci and B.~King. (2014, Dec.) We can't trust {Uber}. New York Times.

\bibitem{sandholm2001potential}
W.~H. Sandholm, ``Potential games with continuous player sets,'' \emph{Journal
  of Economic Theory}, vol.~97, no.~1, pp. 81--108, 2001.

\bibitem{roughgarden2007}
T.~Roughgarden, ``Routing games,'' in \emph{Algorithmic game theory}.\hskip 1em
  plus 0.5em minus 0.4em\relax Cambridge University Press, 2007, ch.~18, pp.
  461--486.

\bibitem{krichene2015learning}
W.~Krichene, B.~Drigh{\`e}s, and A.~Bayen, ``Learning nash equilibria in
  congestion games,'' \emph{SIAM Journal on Control and Optimization (SICON),
  to appear}, 2015.

\bibitem{blum2006routing}
A.~Blum, E.~Even-Dar, and K.~Ligett, ``Routing without regret: on convergence
  to nash equilibria of regret-minimizing algorithms in routing games,'' in
  \emph{Proceedings of the twenty-fifth annual ACM symposium on Principles of
  distributed computing}, ser. PODC '06.\hskip 1em plus 0.5em minus 0.4em\relax
  New York, NY, USA: ACM, 2006, pp. 45--52.

\bibitem{krichene2015MD}
W.~Krichene, S.~Krichene, and A.~Bayen, ``Convergence of mirror descent
  dynamics in the routing game,'' in \emph{European Control Conference (ECC),
  accepted}, 2015.

\bibitem{juditsky2011variational}
\BIBentryALTinterwordspacing
A.~Juditsky, A.~Nemirovski, and C.~Tauvel, ``Solving variational inequalities
  with stochastic mirror-prox algorithm,'' \emph{Stoch. Syst.}, vol.~1, no.~1,
  pp. 17--58, 2011. [Online]. Available:
  \url{http://dx.doi.org/10.1214/10-SSY011}
\BIBentrySTDinterwordspacing

\bibitem{krichene2015SMD}
S.~Krichene, W.~Krichene, R.~Dong, and A.~Bayen, ``Convergence of stochastic
  mirror descent and applications to distributed optimization,'' in
  \emph{Internation Conference on Machine Learning (ICML), in review}, 2015.

\bibitem{Dong2014a}
\BIBentryALTinterwordspacing
R.~Dong, L.~Ratliff, H.~Ohlsson, and S.~S. Sastry, ``Fundamental limits of
  nonintrusive load monitoring,'' in \emph{Proc. of the 3rd Int. Conf. on High
  Confidence Networked Systems}, ser. HiCoNS '14.\hskip 1em plus 0.5em minus
  0.4em\relax New York, NY, USA: ACM, 2014, pp. 11--18. [Online]. Available:
  \url{http://doi.acm.org/10.1145/2566468.2566471}
\BIBentrySTDinterwordspacing

\bibitem{Sankar2011}
L.~Sankar, S.~Kar, R.~Tandon, and H.~Poor, ``Competitive privacy in the smart
  grid: An information-theoretic approach,'' in \emph{2011 IEEE Int. Conf. on
  Smart Grid Communications (SmartGridComm)}, Oct 2011, pp. 220--225.

\bibitem{Salamatian2013}
S.~Salamatian, A.~Zhang, F.~du~Pin~Calmon, S.~Bhamidipati, N.~Fawaz, B.~Kveton,
  P.~Oliveira, and N.~Taft, ``How to hide the elephant-or the donkey-in the
  room: Practical privacy against statistical inference for large data,''
  \emph{IEEE GlobalSIP}, 2013.

\bibitem{Venkitasubramaniam2013}
P.~Venkitasubramaniam, ``Privacy in stochastic control: A markov decision
  process perspective,'' in \emph{2013 51st Annu. Allerton Conf. on
  Communication, Control, and Computing (Allerton)}, Oct 2013, pp. 381--388.

\bibitem{Dwork2006}
C.~Dwork, ``Differential privacy,'' in \emph{Proc. of the Int. Colloq. on
  Automata, Languages and Programming}.\hskip 1em plus 0.5em minus 0.4em\relax
  Springer, 2006, pp. 1--12.

\bibitem{Duchi2012}
J.~C. Duchi, M.~I. Jordan, and M.~J. Wainwright, ``Privacy aware learning,''
  \emph{arXiv}, 2012.

\bibitem{Hsu2014}
J.~Hsu, Z.~Huang, A.~Roth, and Z.~S. Wu, ``Jointly private convex
  programming,'' \emph{arXiv}, 2014.

\bibitem{Huang2014a}
Z.~Huang, S.~Mitra, and N.~Vaidya, ``Differentially private distributed
  optimization,'' \emph{arXiv}, 2014.

\bibitem{Han2014}
S.~Han, U.~Topcu, and G.~J. Pappas, ``Differentially private distributed
  constrained optimization,'' \emph{arXiv}, 2014.

\bibitem{LeNy2014}
J.~Le~Ny and G.~Pappas, ``Differentially private filtering,'' \emph{IEEE Trans.
  Autom. Control}, vol.~59, pp. 341--354, Feb 2014.

\bibitem{Dwork2014}
C.~Dwork and A.~Roth, \emph{The Algorithmic Foundations of Differential
  Privacy}.\hskip 1em plus 0.5em minus 0.4em\relax Foundations and Trends in
  Theoretical Computer Science, 2014.

\bibitem{cesa2006prediction}
N.~Cesa-Bianchi and G.~Lugosi, \emph{Prediction, learning, and games}.\hskip
  1em plus 0.5em minus 0.4em\relax Cambridge University Press, 2006.

\bibitem{nemirovski1983problem}
A.~S. Nemirovsky and D.~B. Yudin, \emph{Problem complexity and method
  efficiency in optimization}, ser. Wiley-Interscience series in discrete
  mathematics.\hskip 1em plus 0.5em minus 0.4em\relax Wiley, 1983.

\bibitem{bubeck2012regret}
S.~Bubeck and N.~Cesa-Bianchi, ``Regret analysis of stochastic and
  nonstochastic multi-armed bandit problems,'' \emph{Foundations and Trends in
  Machine Learning}, vol.~5, no.~1, pp. 1--122, 2012.

\bibitem{Beck2003mirror}
A.~Beck and M.~Teboulle, ``Mirror descent and nonlinear projected subgradient
  methods for convex optimization,'' \emph{Oper. Res. Lett.}, vol.~31, no.~3,
  pp. 167--175, May 2003.

\end{thebibliography}

\end{document}